 \newcommand{\beqn}{\begin{eqnarray}}
 \newcommand{\eeqn}{\end{eqnarray}}
 \newcommand{\be}{\begin{equation}}
 \newcommand{\ee}{\end{equation}}
 \newcommand{\ba}{\begin{array}}
 \newcommand{\ea}{\end{array}}
 \newcommand{\pa}{\partial}
\newcommand{\la}{\label}
\newcommand{\rRe}{{\rm Re\5}}
 \newcommand{\fr}{\frac}
\newcommand{\ov}{\overline}
\newcommand{\ti}{\tilde}
\newcommand{\cF}{{\cal F}}
\newcommand{\cE}{{\cal E}}
\newcommand{\ve}{\varepsilon}
\newcommand{\De}{\Delta}
\newcommand{\al}{\alpha}
\newcommand{\si}{\sigma}
\newcommand{\om}{\omega}
\newcommand{\na}{\nabla}
\newcommand{\lam}{\lambda}
\newcommand{\Lam}{\Lambda}
\newcommand{\5}{{\hspace{0.5mm}}}
\newcommand{\R}{\mathbb{R}}
\newcommand{\C}{\mathbb{C}}
\newtheorem{theorem}{Theorem}[section]
\newtheorem{defin}[theorem]{Definition}
\newtheorem{lemma}[theorem]{Lemma}
\newtheorem{remark}[theorem]{Remark}
\newtheorem{cor}[theorem]{Corollary}
\newtheorem{pro}[theorem]{Proposition}
\newcommand{\bp}{\begin{pro}}
\newcommand{\ep}{\end{pro}}
\newcommand{\bt}{\begin{theorem}}
\newcommand{\et}{\end{theorem}}
\newcommand{\bl}{\begin{lemma}}
\newcommand{\el}{\end{lemma}}
\newcommand{\const}{\mathop{\rm const}\nolimits}
\newcommand{\bpr}{\begin{proof}}
\newcommand{\epr}{\end{proof}}
\newcommand{\br}{\begin{remark}}
\newcommand{\er}{\end{remark}}
\newcommand{\bd}{\begin{defin}}
\newcommand{\ed}{\end{defin}}
\newcommand{\bc}{\begin{cor}}
\newcommand{\ec}{\end{cor}}
\begin{document}
%%%%%%%%%%%%%%%%%%%%%%%%%%%%%%%%%%%%%%%%%%%%%%%%%%%%%%%%%%%%%%%%%%%%%%%%%%%%%%%%%%%%%
\begin{center}
{\Large\bf Global attraction to solitons for 2D Maxwell--Lorentz 
 \bigskip
 
equations with spinning particle}
 \bigskip \bigskip

 {\large E.A. Kopylova}\footnote{ 
 Supported partly by Austrian Science Fund (FWF) P34177
 }
 \medskip
 \\
{\it
Faculty of Mathematics of   Vienna University\\
 %Institute for Information Transmission Problems RAS
 }
 
\medskip

 {\large A.I. Komech} \medskip
 \\
{\it
 Institute of Mathematics, BOKU University, Vienna
 %Institute for Information Transmission Problems RAS
}
 
\end{center}
\begin{abstract}
We consider the 2D Maxwell--Lorentz system which describes  a rotating  particle in electromagnetic field. 
The system admits stationary soliton-type solutions.
We prove the attraction to solitons for any finite energy solution relying on the conservation of  angular momentum. 
 \end{abstract}
%%%%%%%%%%%%%%%%%%%%%%%%%%%%%%%%%%%%%%%%%%%%%%%%%%
%%%%%%%%%%%%%%%%%%%%%%%%%%%%%%%%%%%%%%%%%%%%%%%%%
\setcounter{equation}{0}
\section{Introduction} 
%%%%%%%%%%%%%%%%%%%%%%%%%%%%%%%%%%%%%%%%%%%%%%%%%
The 2D Maxwell--Lorentz equations reads \cite{KK2023, S2004}
\be\la{mls0}
\left\{\ba{rcl}
\dot E(x,t)&=&J\na B(x,t)-[\dot q(t)-\om(t) J(x-q(t))] \rho(x-q(t))\\
\dot B(x,t)&=&-\na \cdot(JE(x,t)),
\qquad
\na\cdot E(x,t)=\rho (x-q(t))\\
m\ddot q(t)&=&\langle E(x,t)+B(x,t) [J\dot q(t) +\om(t) (x-q(t)) ],\rho(x\!-\!q(t))\rangle\\
I\dot \om(t)&=&\langle (x\!-\!q(t))\cdot \big[J E(x,t)\!-B(x,t) \dot q(t)],\rho(x\!-\!q(t))\rangle
\ea\right|, \quad x\in\R^2. 
\ee
Here $(E(x, t), B(x, t))\in \R^2\oplus\R^2$ is the Maxwell field, $\rho(x-q(t))\in\R$ is the charge distribution of
an extended particle centered at a point $q(t)\in R^2$, $m$ is the mass of the particle, $I$ is its moment of inertia, 
and $\omega(t)$ is the angular velocity of the particle rotation.
The brackets $\langle\,,\,\rangle$ denote the inner product in the  Hilbert space $L^2:=L^2(\R^2)\otimes\R^2$,
and 
\be\la{J}
J=\begin{pmatrix} 0& 1\\ -1& 0\end{pmatrix}.
\ee
We  restrict ourselves  to  the situation, where  the spinning particle is located at the origin, i.e. $q(t)\equiv 0$. 
This can be achieved by assuming the (anti-) symmetry conditions
$E(-x)=-E(x)$, $B(-x)=B(x)$  for the initial fields. Then this property persists for all times.
In this case the Maxwell--Lorentz equations \eqref{mls0} simplify to the linear system
\be\la{mls2}
\left\{\ba{rcl}
\dot E(x,t)&=&J\na B(x,t)+\om(t) Jx\rho(x),
\\
\dot B(x,t)&=&-\na \cdot JE(x,t), \qquad \na\cdot E(x,t)=\rho (x)\\
I\dot \om(t)&=&\langle x\cdot J E(x,t),\rho(x)\rangle
\ea\right|.
\ee
We assume that   real-valued charge density $\rho(x)$ is a smooth rapidly decreasing   spherically symmetric function, i.e.,
\be\la{rosym}
\rho\in  {\cal S}(\R^2),\qquad\rho(x)=\rho_{1}(|x|).
\ee
Moreover, we impose that the total charge vanishes (neutrality of the particle):
\be\la{zero}
\int \rho(x)\, dx=0.
\ee

The system admits stationary solutions (solitons) $(E_{\om}(x), B_{\om}(x),\om)$ 
rotating with constant angular velocity  $\om\in\R$.
The  condition $\hat\rho(0)=0$  provides that the solitons belong to the finite energy phase space (\ref{psp}).

One main result is the long-time convergence in  weighed energy norms of any finite energy solution with the angular momentum $M$
to the soliton  with the  same angular momentum. 

Our proof  relies on the  regularity of the  resolvent $({\bf H}-\lam)^{-1}$ at the threshold $\lam=0$ , where  
\be\la{AA2}
{\bf H}:=\left(
\ba{cccc}
0 & 1 \\
\De -\frac {1}{I}|J\varrho\rangle \langle J\varrho| & 0 \\
\ea\right),\quad \varrho (x)=x\rho(x)
\ee
is the operator defined on the space   \eqref{psp}.
We calculate the Puiseux expansion of the resolvent up to $\lam^2$ by suitable development of the 
Agmon--Jensen--Kato theory \cite{A, JK1979, KK2012, K2010}.     
The condition \eqref{zero} provides that  the resolvent belongs  to $C^2$ at a vicinity of  $\lam=0$. 
This fact implies the decay $|t|^{-2}$ of the solution to (\ref{lin}) in weighted energy norms (\ref{Ebet}) for sufficiently large moment of inertia $I$. 
This decay rate allows us to obtain scattering asymptotics  \eqref{AP-as} in the global energy  norm (without a weight).

Note that  the decay $|t|^{-2}$ in the weighted energy norms (\ref{Ebet}) does not contradict the well known slow local decay $|t|^{-1}$ of 
solutions of the 2D wave equation. 
Indeed, the slow decay  $|t|^{-1}$ corresponds to the singularity 
of the resolvent $(\De-\lam)^{-1}\sim \log\lam$ at the threshold $\lam=0$.  
The corresponding {\it resonances} are the constant functions.
However,  the decay  in the weighted norms  (\ref{Ebet}) 
is better since the  norm includes spatial derivatives of the 
solution (which vanish on the resonances), but does not include the values of the solution.

Let us comment on  previous related results. 
The global attraction to stationary states was established in \cite{KKS2018, KS1998} for 3D wave  system 
with a moving  particle  without rotation. 
In \cite{KS2000}, the global attraction to stationary states was established  for 3D Maxwell--Lorentz system 
with a relativistic particle without rotation in presence of an external confining potential. 
The global convergence to solitons of such system without  external  potential was proved in \cite{IKM2004}.
The results \cite{IKM2004, KS2000} provide the first rigorous proof of the {\it radiation damping} in classical electrodynamics. 
These results were obtained under the Wiener-type condition on the charge density $\rho(x)$.

The global convergence to rotating solitons was established in \cite{IKS2004} for solutions to the 3D Maxwell--Lorentz system  with $q(t)\equiv 0$
in the case of sufficiently small charge density $\rho$.
The global attraction to the set of all solitons  for this system
was established in \cite{K2010}  under a considerably weaker Wiener-type condition than in \cite{IKM2004}. 
{The survey of the results for 3D system can be found in} \cite{KK2020, KK2022}.

In \cite{KK2023}, we have proved the orbital stability of moving and rotating solitons for the 2D Maxwell--Lorentz system \eqref{mls0}.
The proof relies on the reduction of the system by the canonical transformation  to a comoving frame 
and the  conservation of the corresponding linear and angular momenta. 
%%%%%%%%%%%%%%%%%%%%%%%%%%%%%%%%%%%%%%%%%%%%%%%%%%%%%%%
%%%%%%%%%%%%%%%%%%%%%%%%%%%%%%%%%%%%%%%%%%%%%%%%%%%%%%%
\setcounter{equation}{0}
\section{The Maxwell potentials}
%%%%%%%%%%%%%%%%%%%%%%%%%%%%%%%%%%%%%%%%%%%%%%%%%%%%%%%%%
In this section we recall the Hamiltonian
structure for  the system (\ref{mls2}) expressed in the Maxwell potentials $A(x,t)=(A_1(x,t),A_2(x,t))$ and $\Phi(x,t)$:
\be\la{AA}
B(x,t)=\na\cdot (JA)=\na_1A_2(x,t)-\na_2 A_1(x,t),\qquad E(x,t)=-\dot A(x,t)-\na \Phi(x,t).
\ee
We choose the Coulomb gauge
\be\la{Cg}
\na\cdot A(x,t)=0.
\ee
Then the  first two equations of  (\ref{mls2}) are equivalent to the system
\be\la{2mA}
\left\{\ba{rcl}
-\ddot A(x,t)-\na\dot\Phi(x,t)&=&-\De A(x,t)+\om(t) Jx\rho(x), \\
-\De\Phi(x,t)&=&\rho(x)
\ea\right|.
\ee
The second equation  of (\ref{2mA}) can be solved explicitly:
\be\la{2mA3}
\Phi(x,t)=\Phi(x)=-\fr1{2\pi}\int \log |x-y| \rho(y)dy.
\ee
In the Fourier transform $\hat\Phi_0(k)=\hat\rho(k)/k^2$.
Hence, $\Phi(x)\in {\cal S}(\R^2)$ due to \eqref{zero} and spherical symmetry (\ref{rosym}).
Now the last equation of (\ref{mls2}) becomes
\be\la{laq}
I\dot \om(t)=-\langle J(\dot A(x,t)+\na\Phi(x),\varrho(x)\rangle
=-\langle J\dot A(x,t),\varrho(x)\rangle,\quad \varrho(x):=x\rho(x)
\ee
since the spherical symmetry  (\ref{rosym}) implies
$$
\langle J\na\Phi(x),\varrho(x)\rangle=\langle x\cdot J\na\Phi(x),\rho(x)\rangle=-\langle \na\cdot Jk \hat\Phi(k),\hat\rho(k)\rangle
=-\int (k_2\na_1-k_1\na_2)\frac{\hat\rho^2(k)}{k^2} dk=0.
$$
The angular momentum is defined as in \cite{KK2023}:
\be\la{pM}
M=I\om-\langle A,J\varrho\rangle.
\ee
By \eqref{laq} and \eqref{AA},
$$
\dot M(t)=I\dot\om(t) -\langle \dot A(x,t),J\varrho(x)\rangle\\
\nonumber
=-\langle J \dot A,\varrho(x)\rangle-\langle \dot A,J\varrho(x)\rangle=0.
$$
Thus, the angular momentum is conserved:
$M(t)=M$,  $t\in\R$.
Hence,
the system  (\ref{mls2}) can be rewritten as the linear inhomogeneous system
\be\la{mls3}
\left\{\ba{llll}
\dot A(x,t)=\Pi(x,t)
\\
\dot \Pi(x,t)=\Delta A(x,t)-\om(t) J\varrho(x)
\ea\right|,
\ee
where
\be\la{rw}
\om(t):=\frac{1}{I}\big(M+\langle A(t),J\varrho\rangle\big).
\ee
Let us introduce a phase space for the system \eqref{mls3}. Denote by 
$\dot H^1$  the closure of $C_0^\infty (\R^2)\otimes \R^2$ with respect to the norm 
$\Vert A\Vert_{\dot H^1}=\Vert \nabla A\Vert_{L^2}$, where $L^2:=L^2(\R^2)\otimes\R^2$.
Define the energy phase space
\be\la{psp}
{\cal E}=\dot H^1\oplus L^2.
\ee
Obviously, (\ref{mls3}) is the canonical Hamilton system with the Hamilton functional
\be\la{Hp}
H(Y)=\fr12\int\big(\Pi^2+|\na A|^2\big)dx+\fr{I\om^2}2,\quad Y:=(A,\Pi)\in {\cal E},
\ee
where $\frac{I\om^2}2=\frac{1}{2I}(M+\langle A(x),J\varrho(x)\rangle)^2$ in accordance with \eqref{rw}.
The Hamiltonian (\ref{Hp}) is well defined and  Fr\'echet  differentiable on the Hilbert phase space (\ref{psp}).
The well-posedness  for the system \eqref{mls3} is proved in 
\cite{KK2023}.

\begin{pro} \la{wp}
Let \eqref{rosym} holds, and let $Y_0=(A_0, \Pi_0)\in {\cal E}$. Then\\
(i) there exists a unique solution $Y(t)\in C(\R, {\cal E})$ to the Cauchy problem for \eqref{mls3};\\
(ii) the energy is conserved: $H(Y(t))=H(Y_0)$ for  $t\in R$;\\
\end{pro}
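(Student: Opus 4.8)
The plan is to treat \eqref{mls3}–\eqref{rw} as a linear inhomogeneous evolution equation in the Hilbert space $\cal E$ and apply a standard semigroup-plus-perturbation argument. First I would recast the system in the abstract form $\dot Y(t)=\mathcal{A}Y(t)+F(Y(t))$, where the unbounded part is the free Maxwell generator
\be\la{genA}
\mathcal{A}=\left(\ba{cc} 0 & 1 \\ \Delta & 0\ea\right),\qquad \mathcal{D}(\mathcal{A})=\{(A,\Pi)\in\cal E:\Delta A\in L^2,\ \Pi\in\dot H^1\},
\ee
and the coupling to the rotation is collected into the affine term $F(Y)=(0,-\om J\varrho)$ with $\om=\frac1I(M+\langle A,J\varrho\rangle)$ given by \eqref{rw}. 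I would first show that $\mathcal{A}$ generates a strongly continuous group $e^{t\mathcal{A}}$ on $\cal E$: this is essentially the free 2D wave group written in first-order form, and energy conservation for the free system (the $\Pi^2+|\na A|^2$ part of \eqref{Hp}) shows it is a unitary/isometric group, so Stone's theorem or a direct Fourier-transform construction applies. The anti-symmetry reduction $q(t)\equiv 0$ has already removed the particle-position degrees of freedom, so no nonlinear ODE coupling survives here.

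The crucial structural observation is that $F$ is a \emph{bounded affine} map on $\cal E$. Indeed $\om$ depends on $Y$ only through the scalar $\langle A,J\varrho\rangle$, and since $\varrho(x)=x\rho(x)\in\cal S(\R^2)$ with $\int\rho\,dx=0$ (so that $\hat\varrho$ vanishes suitably at $k=0$), the linear functional $A\mapsto\langle A,J\varrho\rangle$ is continuous on $\dot H^1$. Hence $Y\mapsto\om(Y)$ is continuous and affine, and $F(Y)=(0,-\om(Y)J\varrho)$ is a globally Lipschitz (indeed affine) map $\cal E\to\cal E$ with the fixed vector $J\varrho\in L^2$. Global-in-time existence and uniqueness in $C(\R,\cal E)$ then follow from the variation-of-constants formula
\be\la{Duh}
Y(t)=e^{t\mathcal{A}}Y_0+\int_0^t e^{(t-s)\mathcal{A}}F(Y(s))\,ds
\ee
together with the Banach fixed-point theorem on each interval $[0,T]$; the Lipschitz constant of $F$ is independent of $T$, so the local solution extends to all of $\R$ with no blow-up, giving part (i).

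For part (ii), I would verify energy conservation by differentiating $H(Y(t))$ along the flow. The clean way is to use the Hamiltonian structure already identified in \eqref{Hp}: since \eqref{mls3} is stated to be the canonical Hamilton system for $H$, one has formally $\frac{d}{dt}H(Y(t))=\{H,H\}=0$. To make this rigorous without assuming extra regularity, I would first establish the identity on the dense subspace $\mathcal{D}(\mathcal{A})$ by a direct computation, integrating the time-derivative of $\frac12(\Pi^2+|\na A|^2)$ by parts and checking that the cross terms cancel against $\frac{d}{dt}\big(\frac1{2I}(M+\langle A,J\varrho\rangle)^2\big)$ using $\om=\frac1I(M+\langle A,J\varrho\rangle)$ and $\dot A=\Pi$; the cancellation is exactly the computation $\dot M=0$ carried out before \eqref{pM}. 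Then I would extend energy conservation to arbitrary finite-energy data by approximating $Y_0\in\cal E$ by $Y_0^{(n)}\in\mathcal{D}(\mathcal{A})$, using continuous dependence of the solution on initial data (immediate from \eqref{Duh} and Gronwall) and continuity of $H$ on $\cal E$.

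The main obstacle I anticipate is purely technical rather than conceptual: confirming that the free generator $\mathcal{A}$ really is a group generator on the \emph{homogeneous} space $\dot H^1\oplus L^2$ rather than on a standard Sobolev space. In two dimensions $\dot H^1$ is a genuine completion (it is not continuously embedded in $L^2$), so one must check that the wave propagator respects this homogeneous norm and that the functional $\langle A,J\varrho\rangle$ is well defined on it—this is precisely where the neutrality condition \eqref{zero} and the decay of $\varrho$ are used. Once that mapping-property bookkeeping is in place, everything else is routine fixed-point and density machinery; moreover the well-posedness is already asserted to be proved in \cite{KK2023}, so the argument here can be presented as a self-contained recollection rather than a new difficulty.
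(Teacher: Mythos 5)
Your proposal cannot be checked against an in-paper argument, because the paper does not prove Proposition \ref{wp} at all: it simply states that well-posedness ``is proved in \cite{KK2023}.'' That said, your route is the standard and correct one, and it is almost certainly the intended argument: the generator $\mathcal{A}=\begin{pmatrix}0&1\\ \Delta&0\end{pmatrix}$ is skew-adjoint with respect to the energy inner product on ${\cal E}=\dot H^1\oplus L^2$ and generates a unitary group (Stone's theorem, or directly in Fourier variables); $F(Y)=(0,-\om(Y)J\varrho)$ with $\om(Y)=\frac1I(M+\langle A,J\varrho\rangle)$ is affine with a $T$-independent Lipschitz constant, so Duhamel plus Banach fixed point gives global existence and uniqueness in $C(\R,{\cal E})$; and your verification of (ii) is sound, since $I\dot\om=\langle\Pi,J\varrho\rangle$ makes the $\om$-terms cancel and $\langle\Pi,\Delta A\rangle+\langle\na A,\na\Pi\rangle=0$ on the dense domain, after which density and continuous dependence (Gronwall) extend conservation to all of ${\cal E}$.

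One slip is worth fixing. You justify the continuity of $A\mapsto\langle A,J\varrho\rangle$ on $\dot H^1$ by invoking the neutrality condition \eqref{zero}, but Proposition \ref{wp} assumes only \eqref{rosym}; as written, your proof would establish a weaker statement than claimed. In fact \eqref{zero} is not needed here: since $\hat\varrho(k)=i\na\hat\rho(k)$ and $\hat\rho$ is a smooth radial function (hence a smooth function of $|k|^2$), one has $\hat\varrho(k)={\cal O}(|k|)$ as $k\to 0$ regardless of the value of $\hat\rho(0)$, so $\varkappa_0=\int|\hat\varrho(k)|^2k^{-2}\,dk<\infty$ and, by Plancherel and Cauchy--Schwarz, $|\langle A,J\varrho\rangle|\le\varkappa_0^{1/2}\Vert\na A\Vert_{L^2}$. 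Moreover $\int\varrho\,dx=\int x\rho(x)\,dx=0$ already by the spherical symmetry in \eqref{rosym}, so the pairing annihilates constants and is well defined on the homogeneous completion $\dot H^1$ --- which settles exactly the bookkeeping issue you flag in your last paragraph. With spherical symmetry rather than neutrality cited as the operative hypothesis (neutrality is used later, e.g.\ for the solitons to lie in \eqref{psp} and for the threshold regularity), your argument is complete.
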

%%%%%%%%%%%%%%%%%%%%%%%%%%%%%%%%%%%%%%%%%%%%%%%%%%
%%%%%%%%%%%%%%%%%%%%%%%%%%%%%%%%%%%%%%%%%%%%%%%%%%%%%%%%%%
\subsection{Solitons}
%%%%%%%%%%%%%%%%%%%%%%%%%%%%%%%%%%%%%%%%%%%%%%%%%%%%%%%%%%
The solitons of the system (\ref{mls3}) are stationary solutions 
\be\la{solvom}
Y_{\omega}=(A_{\om}(x), 0).
\ee
Substituting  into \eqref{mls3}, we get
\be\la{solit2}
\Delta  A_{\omega}(x)=\omega J\varrho(x).
\ee
In the Fourier representation
\be\la{solit3}
\hat A_{\omega}(k)=-\fr{\omega J\hat\varrho(k)}{k^2}.
\ee
By \eqref{zero},
$\hat A_{\om}(k) \sim k$,   as $|k|\sim 0$. 
Hence, $\na A_{\om}\in L^2(\R^2)$.
The angular momentum $M_{\om}$ of the soliton (\ref{solvom})
is expressed by
\be\la{mosol}
M_{\om}=I\om+\om\langle\fr{J\ti\varrho}{k^2}, J\ti\varrho\rangle=\om(I+\varkappa_0),\quad\varkappa_0=\int\frac{|\hat\varrho(k)|^2dk}{k^2}.
\ee
%%%%%%%%%%%%%%%%%%%%%%%%%%%%%%%%%%%%%%%%%%%%%%%%%%%%%%%%%%
\setcounter{equation}{0}
\section{Main result}
%%%%%%%%%%%%%%%%%%%%%%%%%%%%%%%%%%%%%%%%%%%%%%%%%%%%%%%%%
To state our main result we need some notations.
Denote
\be\la{kappa} 
\varkappa(\lam)=\int\frac{|\hat\varrho(k)|^2dk}{k^2+\lam^2},\quad\rRe\lam>0.
\ee
The function $\varkappa(\lam)$ is analytic in the halfplane $\rRe\lam>0$.
By \eqref{rosym} and by the Sokhotski-Plemelj formulas, there exists the limit
$$
\varkappa(i\mu+0)=\lim_{\ve\to +0}\varkappa(i\mu+\ve),\quad \mu\in\R,
$$
where $\varkappa(+0)=\varkappa(0)$ by (\ref{zero}).
By Corollary \ref{cor-as},  $\varkappa(i\mu+0)\in C^2(\R\setminus 0)\cup C^1(\R)$. We will suppose that 
\be\la{M-condition}
I+\varkappa(i\mu+0)\ne 0,\qquad\mu\in\R.
\ee
Note that the condition holds  obviously for $\mu=0$. It holds for all $\mu\in\R$ if the moment of inertia $I$ is
sufficiently large.

Let $H^s_\beta=H^s_\beta(\R^2)$ be the weighted Sobolev spaces with the finite norm
$$
\Vert f\Vert_{H^s_\beta}=\Vert \langle x\rangle^{\beta}\langle \na\rangle^s f\Vert_{L^2(\R^2)},\quad \langle x\rangle=(1+|x|^2)^{1/2}.
$$
Denote $L^2_{\beta}=H^0_{\beta}$.
%%%%%%%
\begin{defin}
{\it i)} ${\cal E}_{\beta}$, $~\beta\in R$,  
is the Hilbert space of   vector  fields $(A,\Pi)$ with finite norm
\be\la{Ebet}
\Vert (A,\Pi)\Vert_{\beta}:=\Vert (A,\Pi)\Vert_{{\cal E}_\beta}=\Vert \na A\Vert_{{\bf L}^2_{\beta}}+\Vert \Pi\Vert_{{\bf L}^2_{\beta}}.
\ee
{\it ii)} ${\cal E}_{\beta}^+$ is the Hilbert space of   vector  fields $(A,\Pi)$ with finite norm
$$
\Vert (A,\Pi)\Vert_{{\cal E}_\beta^+}=\Vert A\Vert_{{\bf L}^2_{\beta}}+\Vert \na A\Vert_{{\bf L}^2_{\beta}}+\Vert \Pi\Vert_{{\bf L}^2_{\beta}}.
$$
\end{defin}
In these notations,  ${\cal E}={\cal E}_0$.
%%%%%%%%%%%%%%%%%%%%%%%%%%%%%
\begin{theorem}\la{main}
Let conditions \eqref{rosym},  \eqref{zero} and \eqref{M-condition} hold, and  $Y_0\in {\cal E}_{\beta}^+$ with $\beta >5/2$.  Let 
$Y(t)\in C(\R, {\cal E})$ be the solution to \eqref{mls3} with initial data $Y_0$
and the corresponding angular momentum $M$. 
{\it i)} Then the solution converges to the soliton (\ref{solvom}) with the 
same angular momentum $M$: 
\be\la{atr}
|\om(t)-\om_*|={\cal O}(|t|^{-2}), \quad \Vert Y(t)-Y_{\om_*}\Vert_{-\beta}={\cal O}(|t|^{-2}),\quad t\to\infty,
\ee
where the  frequency $\om_*$ is defined by (\ref{mosol}):
\be\la{om0} 
\om_*:=\fr {M}{I+\varkappa(0)}.
\ee
{\it ii)}
The scattering asymptotics hold
\be\la{AP-as}
Y(x,t)=Y_{\om_*}(x)+W(t)\Psi+r(x,t),\quad t\to\pm\infty.
\ee
Here $W(t)$ is the dynamical group of the free wave equation, $\Psi\in {\cal E}$, 
and the remainder $r(x,t)$ decays to zero in the global energy norm:
\be\la{r-as}
\Vert r(t)\Vert_{\cal E}={\cal O}(|t|^{-1}),\quad t\to\pm\infty.
\ee
\end{theorem}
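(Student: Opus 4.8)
The plan is to exploit that the reduced system \eqref{mls3}--\eqref{rw} is in fact \emph{linear}, and to recast it as a homogeneous evolution governed by the operator ${\bf H}$ of \eqref{AA2}. Writing $Y=(A,\Pi)$ and inserting \eqref{rw} into \eqref{mls3}, the affine dependence of $\om(t)$ on $A$ splits into a linear part, absorbed into ${\bf H}$, and a constant part producing a fixed source:
\be
\dot Y(t)={\bf H}Y(t)+G,\qquad G=-\fr MI\,(0,J\varrho).
\ee
The soliton $Y_{\om_*}$ of \eqref{solvom}--\eqref{solit2} is exactly the stationary point ${\bf H}Y_{\om_*}=-G$: with $Y_{\om_*}=(A_{\om_*},0)$ this reads $\De A_{\om_*}=\om_*J\varrho$ and $\om_*=\fr1I(M+\langle A_{\om_*},J\varrho\rangle)$, and solving the self-consistency relation via \eqref{mosol} pins down $\om_*=M/(I+\varkappa(0))$ as in \eqref{om0}; since $I+\varkappa(0)\ne0$ this stationary point is the unique soliton carrying angular momentum $M$. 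Consequently the difference $Z(t):=Y(t)-Y_{\om_*}$ obeys the homogeneous equation $\dot Z={\bf H}Z$, i.e. $Z(t)=e^{t{\bf H}}Z(0)$, and the entire theorem reduces to dispersive estimates for the group $e^{t{\bf H}}$.

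Next I would analyze the resolvent ${\bf R}(\lam)=({\bf H}-\lam)^{-1}$. Solving $({\bf H}-\lam)(A,\Pi)=(f,g)$ gives $\Pi=\lam A+f$ together with $\big[(\De-\lam^2)-\tfrac1I|J\varrho\rangle\langle J\varrho|\big]A=g+\lam f$. As the perturbation is rank one, the Krein/Sherman--Morrison formula expresses this inverse through the scalar free resolvent $R_0(\lam)=(\De-\lam^2)^{-1}$:
\be
\big[(\De-\lam^2)-\tfrac1I|J\varrho\rangle\langle J\varrho|\big]^{-1}=R_0(\lam)+\fr{1}{I+\varkappa(\lam)}\,R_0(\lam)|J\varrho\rangle\langle J\varrho|R_0(\lam),
\ee
since $\langle J\varrho,R_0(\lam)J\varrho\rangle=-\varkappa(\lam)$ by \eqref{kappa}. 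Thus the scalar determinant $I+\varkappa(\lam)$ acts as a Jost function, and hypothesis \eqref{M-condition} states precisely that it does not vanish on the continuous spectrum $i\R$; together with the positivity of the Hamiltonian \eqref{Hp} and the relative compactness of the rank-one perturbation, this confines $\Spec{\bf H}$ to $i\R$ with no embedded eigenvalues or resonances, so the boundary values ${\bf R}(i\mu\pm0)$ exist as bounded operators $\cE_\beta^+\to\cE_{-\beta}$ for $\mu\ne0$.

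The core of the argument is the threshold behaviour at $\lam=0$ and the ensuing decay. Starting from the Stone-type representation
\be
e^{t{\bf H}}Z(0)=\fr1{2\pi i}\int_\R e^{i\mu t}\big[{\bf R}(i\mu+0)-{\bf R}(i\mu-0)\big]Z(0)\,d\mu,
\ee
the rate $|t|^{-2}$ follows by integrating by parts twice in $\mu$, provided the integrand is $C^2$ in $\mu$ with values in $\cE_{-\beta}$, integrable, and with vanishing boundary contributions. At $\mu=0$ the two-dimensional free resolvent $R_0$ carries the usual $\log\lam$ singularity; here the neutrality \eqref{zero}, equivalently $\hat\varrho(0)=0$ so $\hat\varrho(k)\sim k$, renders the critical quantities $R_0|J\varrho\rangle$, $\langle J\varrho|R_0$ and $\varkappa(\lam)$ regular enough that the weighted resolvent lies in $C^2$ near the threshold, as recorded in Corollary~\ref{cor-as}. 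This is the main obstacle: executing the Puiseux/Taylor expansion of ${\bf R}(\lam)$ up to order $\lam^2$ in weighted norms through a development of the Agmon--Jensen--Kato theory \cite{A, JK1979, KK2012, K2010}, and verifying that the would-be logarithmic resonance (the constant function) is annihilated by the derivatives entering the energy norm \eqref{Ebet}. The weight $\beta>5/2$ is exactly what makes $R_0$ and its first two $\lam$-derivatives bounded between the weighted spaces and secures integrability in $\mu$ both at the threshold and as $\mu\to\infty$. This yields $\Vert Z(t)\Vert_{-\beta}={\cal O}(|t|^{-2})$, and since $\om(t)-\om_*=\fr1I\langle A(t)-A_{\om_*},J\varrho\rangle$ pairs $Z(t)$ against the Schwartz function $J\varrho$, also $|\om(t)-\om_*|={\cal O}(|t|^{-2})$, which proves \eqref{atr}.

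Finally, for the scattering asymptotics \eqref{AP-as} I would split ${\bf H}={\bf H}_0+V$, where ${\bf H}_0$ is the free wave generator of $W(t)$ and $V=-\tfrac1I|J\varrho\rangle\langle J\varrho|$ acts in the lower component, and apply Duhamel to $W(-t)Z(t)$, using that $W(t)$ is unitary on $\cE$. Here $VZ(s)=-(\om(s)-\om_*)\,(0,J\varrho)$, so $\Vert VZ(s)\Vert_{\cE}=|\om(s)-\om_*|\,\Vert J\varrho\Vert_{L^2}={\cal O}(|s|^{-2})$ is integrable by part~(i). Hence $\Psi:=Z(0)+\int_0^\infty W(-s)VZ(s)\,ds$ converges in $\cE$, and the remainder $r(t)=Z(t)-W(t)\Psi=-W(t)\int_t^\infty W(-s)VZ(s)\,ds$ satisfies $\Vert r(t)\Vert_{\cE}\le\int_t^\infty\Vert VZ(s)\Vert_{\cE}\,ds={\cal O}(|t|^{-1})$, giving \eqref{r-as}. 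Since $Y(t)=Y_{\om_*}+Z(t)=Y_{\om_*}+W(t)\Psi+r(t)$, this is precisely \eqref{AP-as}, completing the proof.
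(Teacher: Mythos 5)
Your global architecture is sound and agrees with the paper up to and including the extraction step: writing $\dot Y={\bf H}Y+G$ with $G=-\frac MI(0,J\varrho)$, identifying $Y_{\om_*}$ as the stationary point with $\om_*=M/(I+\varkappa(0))$, and reducing to $\dot Z={\bf H}Z$ is exactly the paper's Section 4; your Sherman--Morrison inversion with determinant $I+\varkappa(\lam)$ is equivalent to the paper's explicit formulas \eqref{hat-Lam}--\eqref{nu}; and your proof of part (ii) (Duhamel around the unitary group $W(t)$, with $\Vert VZ(s)\Vert_{\cE}=|\om(s)-\om_*|\,\Vert J\varrho\Vert_{L^2}={\cal O}(s^{-2})$ integrable, giving $\Vert r(t)\Vert_{\cE}\le\int_t^\infty{\cal O}(s^{-2})ds={\cal O}(t^{-1})$) coincides with Section \ref{sol-as}. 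The gap is in part (i): your plan to prove $\Vert e^{t{\bf H}}Z_0\Vert_{-\beta}={\cal O}(t^{-2})$ from a vector-valued Stone formula by integrating by parts twice requires the $\cE_{-\beta}$-valued integrand to be $C^2$ in $\mu$ with integrable second derivative, and this fails at both ends of the spectrum. At the threshold, Lemma \ref{t-ek} gives $R_\pm(\zeta^2)=P_\pm(\zeta)+{\cal O}(\zeta^{3/2})$, where the $\zeta$-dependent part of $P_\pm$ is $-\frac{\log\zeta}{2\pi}$ times the rank-one operator $f\mapsto\int f$. Neutrality kills this term only where the resolvent hits $\varrho$ (which has zero mean); that is all Corollary \ref{cor-as} records --- it concerns scalar pairings against $\hat\varrho$, not operator-valued $C^2$-regularity, so your appeal to it for "the weighted resolvent lies in $C^2$" is a misattribution. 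On generic data the gradient in the norm \eqref{Ebet} does annihilate the constant resonance in the $\Lam$-slot, but the $\Pi$-slot is $\ti\Pi=\lam\ti\Lam-\Lam_0$, which inherits terms of the form $\mu\log|\mu|\cdot\int\Pi_0\,dx$; these are not $C^2$ (the second $\mu$-derivative behaves like $1/\mu$, non-integrable at $\mu=0$), so plain double integration by parts does not close without the finer Jensen--Kato treatment of logarithmic singularities.

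At high energies the obstruction is structural rather than technical: $W(t)$ is unitary on $\cE$, so the Stone integrand for the free part has no decay in energy-type norms, and the weighted high-energy bounds \eqref{A}, \eqref{g-as} buy $\mu$-decay only at the price of derivative loss, which the data $Z_0\in\cE_\beta^+$ (essentially $H^1\oplus L^2$ regularity) cannot pay; absolute integrability of the spectral integral and of its twice-differentiated version is simply not available for the full field. The paper circumvents both problems by never forming the vector-valued spectral integral: the resolvent analysis is applied only to the scalar $\ti\nu(\lam)$ of \eqref{nu}, where $\varrho\in{\cal S}(\R^2)$ supplies both the smoothness and the zero mean, so that $\ti\nu(i\mu+0)\in C^2(\R)$ with $|\ti\nu^{(k)}(i\mu+0)|\le C|\mu|^{-2}$ at infinity (Lemmas \ref{Lem-q-est}, \ref{Lem-q-est2}), yielding $\nu(t)={\cal O}(t^{-2})$ by two integrations by parts (Lemma \ref{vec-decay}); the field decay then follows from the Duhamel formula \eqref{Duhamel} combined with the free dispersive bound \eqref{dede}, which is proved in Appendix \ref{aB} by direct Hilbert--Schmidt estimates on the explicit 2D wave kernel \eqref{Wv} --- precisely because the resolvent route you propose does not deliver it. To repair your argument you should adopt this splitting: isolate the rank-one coupling as an inhomogeneity $(0,\nu(t)J\varrho)$, prove decay of the scalar $\nu(t)$ spectrally, and prove the free weighted decay by kernel methods.
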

%%%%%%%%%%%%%%%%%%%%%%%%%%%%%%%%%%%%%%%%%%%%%%%%%%%%%%%%%%
%%%%%%%%%%%%%%%%%%%%%%%%%%%%%%%%%%%%%%%%%%%%%%%%%%%%%%%%%
\setcounter{equation}{0}
\section{Extraction of a soliton}
%%%%%%%%%%%%%%%%%%%%%%%%%%%%%%%%%%%%%%%%%%%%%%%%%%%%%%%%%   
 We split  a solution to the system \eqref{mls3} as  the sum
\be\la{YZ}
Y(t)=Y_{\om_*}+Z(t),\qquad
Y_{\om_*}=\left(\!\!\ba{c}A_{\om_*}(x)\\ 0\ea\!\!\right),
\quad
Z(t)=\left(\!\!\ba{c}\Lam(x)\\ \Pi(x)\ea\!\!\right).
\ee
Substituting into \eqref{mls3}, we obtain 
\be\la{dot-APi}
\left\{\ba{rcl}
\dot\Lambda(x,t)&=&\Pi(x,t)\\
\dot\Pi(x,t)
&=&\De A_{\om_*}(x)+\De\Lambda(x,t)-\om(t) J\varrho(x)
\ea\right|.
\ee
Using   equation (\ref{solit2}) with $\om=\om_*$, 
and equation (\ref{rw}) with $\om=\om(t)$, we get
\be\la{dotPsi}
\dot \Pi=\om_* J\varrho+\De\Lambda-\om J\varrho=\De\Lambda
-\frac{1}{I}(\langle \Lam+A_{\om_*},J\varrho\rangle-\langle A_{\om_*},J\varrho\rangle)J\varrho
=\De\Lambda-\frac{1}{I}\langle \Lam,J\varrho\rangle J\varrho.
\ee
Now the system \eqref{dot-APi} reads
\be\la{lin}
\dot Z(t)={\bf H}Z(t),\quad t\in\R,
\ee
where the operator ${\bf H}$ is defined by \eqref{AA2}.
%%%%%%
\begin{lemma} \la{haml}
i)  The system (\ref{lin}) formally can be written as the Hamilton system
\be\la{lineh}
\dot Z(t)=
JD{\cal H}(Z(t)),~~~~~~~t\in\R,
\ee
where $D{\cal H}$ is the Fr\'echet derivative of the Hamilton functional
\be\la{H0}
{\cal H}(Z)=\fr12\int\Big[|\Psi|^2+|\na\Lambda|^2\Big]dx+\frac {1}{2I}\langle \Lambda,J\varrho\rangle^2,\quad
Z=(\Lambda,\Psi)\in {\cal E}.
\ee
ii) Energy conservation  holds for solutions $Z(t)\in C^1(\R,\cE)$:
\be\la{enec}
{\cal H}(Z(t))=\const,\qquad t\in\R.
\ee
\end{lemma}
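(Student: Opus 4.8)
The plan is to handle the two assertions separately: part~(i) is a direct computation of the Fr\'echet derivative followed by an application of the symplectic matrix, while part~(ii) follows from the skew-symmetry built into that Hamiltonian form, with the only genuine point being the justification of an integration by parts.

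For part~(i) I would first observe that each of the three terms of $\cH$ in \eqref{H0} is a continuous quadratic form on $\cE=\dot H^1\oplus L^2$, so $\cH$ is Fr\'echet differentiable with derivative obtained by polarization. Differentiating at $Z=(\Lam,\Psi)$ in a direction $(h,g)\in\cE$ gives
\[
D\cH(Z)[(h,g)]=\langle\na\Lam,\na h\rangle+\langle\Psi,g\rangle+\fr1I\langle\Lam,J\varrho\rangle\langle h,J\varrho\rangle .
\]
I would check boundedness of this functional on $\cE$: the first two terms are standard, and the third is controlled by writing $\langle h,J\varrho\rangle=\int \hat h\cdot\ov{\widehat{J\varrho}}\,dk$ and applying Cauchy--Schwarz in the form $|\langle h,J\varrho\rangle|\le\|h\|_{\dot H^1}\,\varkappa_0^{1/2}$, where $\varkappa_0=\int|\hat\varrho|^2/k^2\,dk<\infty$ from \eqref{mosol} (finiteness near $k=0$ uses $\hat\varrho(k)=O(|k|)$, a consequence of \eqref{zero}). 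Thus $D\cH(Z)$ is represented, after integration by parts in the first slot, by the pair $(-\De\Lam+\fr1I\langle\Lam,J\varrho\rangle J\varrho,\,\Psi)$ in the distributional sense, and applying the symplectic matrix $J$ yields $JD\cH(Z)=(\Psi,\,\De\Lam-\fr1I\langle\Lam,J\varrho\rangle J\varrho)$. Since $|J\varrho\rangle\langle J\varrho|\Lam=\langle\Lam,J\varrho\rangle J\varrho$, this is precisely ${\bf H}Z$ from \eqref{AA2}, which is \eqref{lineh}; the word ``formally'' records that $\De\Lam$ is only a distribution for general $\Lam\in\dot H^1$.

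For part~(ii) I would extract, from the hypothesis $Z(t)\in C^1(\R,\cE)$, the extra regularity needed for the computation. Since $\dot Z(t)\in\cE$, its first component $\dot\Lam=\Psi$ lies in $\dot H^1$, so $\na\Psi\in L^2$; and $\dot\Psi\in L^2$ together with \eqref{dotPsi} forces $\De\Lam\in L^2$. The chain rule (valid because $\cH\in C^1$ and $Z\in C^1$) then gives
\[
\fr{d}{dt}\cH(Z(t))=D\cH(Z)[\dot Z]=\langle\na\Lam,\na\Psi\rangle+\langle\Psi,\dot\Psi\rangle+\fr1I\langle\Lam,J\varrho\rangle\langle\Psi,J\varrho\rangle .
\]
Substituting $\dot\Psi=\De\Lam-\fr1I\langle\Lam,J\varrho\rangle J\varrho$, the two $\langle\Lam,J\varrho\rangle\langle\Psi,J\varrho\rangle$ terms cancel directly, and the remaining $\langle\na\Lam,\na\Psi\rangle$ and $\langle\Psi,\De\Lam\rangle=-\langle\na\Psi,\na\Lam\rangle$ cancel after the integration by parts, which is now legitimate since $\na\Psi,\De\Lam\in L^2$. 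Hence $\fr{d}{dt}\cH(Z(t))=0$, which is \eqref{enec}.

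The computations are routine; the only step requiring care is the integration by parts, which I expect to be the main (minor) obstacle. Rather than imposing $\De\Lam\in L^2$ as an extra assumption, I recover it from the $C^1$ hypothesis as above, so no regularity beyond $Z\in C^1(\R,\cE)$ is needed. An equivalent, more conceptual route for~(ii) is to note that $\fr{d}{dt}\cH(Z)=D\cH(Z)[{\bf H}Z]=\langle D\cH(Z),JD\cH(Z)\rangle$, which vanishes by skew-symmetry of $J$; but making this pairing precise requires the same regularity bookkeeping, so the direct cancellation is cleaner to present.
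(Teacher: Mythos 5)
The paper states this lemma without proof, treating it as routine, and your argument is correct and supplies exactly the standard verification the authors have in mind: the Fr\'echet derivative computed by polarization, the identification of $JD{\cal H}(Z)$ with ${\bf H}Z$ from \eqref{AA2}, and the cancellation giving $\frac{d}{dt}{\cal H}(Z(t))=0$, with the extra regularity $\na\Psi,\De\Lam\in L^2$ correctly extracted from the $C^1$ hypothesis together with the equation rather than assumed. One micro-correction: the bound $\hat\varrho(k)={\cal O}(|k|)$ near $k=0$ follows from the smoothness and spherical symmetry in \eqref{rosym} (which force $\na\hat\rho(0)=0$) rather than from \eqref{zero} alone, though under the paper's hypotheses both hold and your estimate $\varkappa_0<\infty$, hence the boundedness of $h\mapsto\langle h,J\varrho\rangle$ on $\dot H^1$, stands.
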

%%%%%%%%%%%%%%%%%%%%%%%%%%%%%%%%%%%%%%%%%%%%%%
\setcounter{equation}{0}
\section{Solution of equation \eqref{lin}}
%%%%%%%%%%%%%%%%%%%%%%%%%%%%%%%%%%%%%%%%%%%%%%%%%%%%%%%%%%%%%%%%%%%%%%%%%%%
We apply the Fourier-Laplace transform
\be\la{FL}
\ti Z(\lam)=\int_0^\infty e^{-\lam t} Z(t)dt,~~~~~~~\rRe\lam>0.
\ee
According to Proposition \ref{wp}, $Z(t)=Y(t)-Y_{\om_*}$ is bounded in $\cE$.
Hence the integral (\ref{FL}) converges and is analytic for $\rRe\lam>0$, and
\be\la{PW}
\Vert\ti Z(\lam)\Vert_{\cE}\le \fr{C}{\rRe\lam},~~~~~~~\rRe\lam>0.
\ee
Applying the Fourier-Laplace transform to (\ref{lin}), we get
\be\la{A-lam}
\lam\ti Z(\lam)={\bf H}\ti Z(\lam)+Z_0, \quad\rRe\lam>0.
\ee
The solution $\ti Z(\lam)$ to \eqref{A-lam} is given by
\be\la{A-lam-sol}
\ti Z(\lam)=-({\bf H}-\lam)^{-1}Z_0,\quad\rRe\lam>0.
\ee
where the resolvent $({\bf H}-\lam)^{-1}$ exists for $\rRe\lam>0$ by the following lemma.
\begin{lemma}
The operator-valued function $({\bf H}-\lam)^{-1}:\cE\to \cE$ is analytic for $\rRe\lam>0$.
\end{lemma}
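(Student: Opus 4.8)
The plan is to construct the resolvent explicitly by solving $({\bf H}-\lam)Z=F$ for an arbitrary $F=(f,g)\in\cE$ and $\rRe\lam>0$, and to read off analyticity from the formula. Writing $Z=(\Lam,\Psi)$, the first component of the equation gives $\Psi=\lam\Lam+f$, and substituting into the second eliminates $\Psi$, reducing the system to the scalar Helmholtz-type equation
\be
(\De-\lam^2)\Lam-\fr1I\langle\Lam,J\varrho\rangle J\varrho=g+\lam f.
\ee
Since $\rRe\lam>0$ we have $-\lam^2\notin[0,\infty)=\Spec(-\De)$, so $R_\lam:=(\De-\lam^2)^{-1}$ is the bounded Fourier multiplier $-(|k|^2+\lam^2)^{-1}$. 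The multipliers $k(|k|^2+\lam^2)^{-1}$ and $\lam(|k|^2+\lam^2)^{-1}$ are bounded for $\rRe\lam>0$, which guarantees $R_\lam(g+\lam f)\in\dot H^1$ for $(f,g)\in\cE$, and $R_\lam J\varrho\in\dot H^1$ since $\varrho\in{\cal S}$. First I would apply $R_\lam$ to obtain $\Lam=R_\lam(g+\lam f)+\fr1I\langle\Lam,J\varrho\rangle R_\lam J\varrho$.

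This is a rank-one equation, solved by pairing with $J\varrho$. Setting $c:=\langle\Lam,J\varrho\rangle$ and using $\langle R_\lam J\varrho,J\varrho\rangle=-\varkappa(\lam)$ (from \eqref{kappa}, with $|\widehat{J\varrho}|=|\hat\varrho|$), I get $c\,(I+\varkappa(\lam))/I=\langle R_\lam(g+\lam f),J\varrho\rangle$, where the right-hand pairing converges by the Cauchy--Schwarz bound $|\langle h,J\varrho\rangle|\le\varkappa_0^{1/2}\|\na h\|_{L^2}$ with $\varkappa_0=\varkappa(0)<\infty$ from \eqref{mosol}. Provided $I+\varkappa(\lam)\ne0$, this determines $c$, hence $\Lam$ and $\Psi=\lam\Lam+f$, yielding an explicit bounded operator $F\mapsto Z$ on $\cE$ that inverts ${\bf H}-\lam$.

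\textbf{Main obstacle.} The crux is the nonvanishing of the scalar factor $I+\varkappa(\lam)$ throughout $\rRe\lam>0$, which I would establish by separating cases. For real $\lam>0$ the integrand in \eqref{kappa} is positive, so $\varkappa(\lam)>0$ and $I+\varkappa(\lam)>I>0$. For $\rIm\lam\ne0$, writing $\lam^2=(a^2-b^2)+2abi$ with $a=\rRe\lam>0$, $b=\rIm\lam$, one computes
\be
\rIm\,\varkappa(\lam)=-2ab\int\fr{|\hat\varrho(k)|^2}{\big||k|^2+\lam^2\big|^2}\,dk\ne0,
\ee
since $ab\ne0$; as $I$ is real this forces $I+\varkappa(\lam)\ne0$. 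Thus ${\bf H}-\lam$ is boundedly invertible for every $\rRe\lam>0$.

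Analyticity then follows because $R_\lam$ is an operator-valued analytic function of $\lam$ in the right half-plane (the multiplier $-(|k|^2+\lam^2)^{-1}$ is analytic with locally uniform bounds), $\varkappa(\lam)$ is analytic there, and $I+\varkappa(\lam)$ is nonvanishing; hence the explicit formula for $({\bf H}-\lam)^{-1}$ is analytic as a map $\cE\to\cE$. I note a softer alternative: by Lemma \ref{haml} the energy ${\cal H}$ is conserved and, by the same bound $|\langle\Lam,J\varrho\rangle|\le\varkappa_0^{1/2}\|\na\Lam\|_{L^2}$, is equivalent to $\|\cdot\|_{\cE}^2$; the group $e^{t{\bf H}}$ is therefore uniformly bounded, and $({\bf H}-\lam)^{-1}=-\int_0^\infty e^{-\lam t}e^{t{\bf H}}\,dt$ is analytic for $\rRe\lam>0$ with $\|({\bf H}-\lam)^{-1}\|\le C/\rRe\lam$, matching \eqref{PW}.
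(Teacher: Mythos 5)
Your proof is correct, but it takes a genuinely different route from the paper. The paper argues softly: energy conservation \eqref{enec} and non-negativity of ${\cal H}$ rule out nonzero solutions $e^{\lam t}Z_\lam$, hence ${\rm Ker}({\bf H}-\lam)=0$; then, since ${\bf H}-\lam={\bf H}_0-\lam+T$ with $T$ finite rank and $({\bf H}_0-\lam)^{-1}$ bounded for $\rRe\lam>0$, invertibility follows from the Fredholm alternative — no explicit formula and no analysis of $\varkappa(\lam)$ is needed at this stage. You instead construct $({\bf H}-\lam)^{-1}$ explicitly via the rank-one reduction, which is essentially the computation the paper itself performs only later, in Section 5 (equations \eqref{F0}--\eqref{nu}); your crux step, the nonvanishing of $I+\varkappa(\lam)$ for $\rRe\lam>0$ via the sign of $\rIm\varkappa(\lam)=-2ab\int|\hat\varrho|^2\,\big||k|^2+\lam^2\big|^{-2}dk$, is a clean Herglotz-type argument that the paper never states: the paper's formula \eqref{nu} divides by $I+\varkappa(\lam)$ in the open half-plane and this nonvanishing is there only implicit in the existence of the resolvent, while condition \eqref{M-condition} is imposed only on the boundary $\lam=i\mu+0$. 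So your approach buys an explicit resolvent formula plus a self-contained justification of \eqref{nu}, at the cost of the multiplier bookkeeping; the paper's buys brevity and robustness (it would survive a perturbation $T$ for which no closed scalar reduction exists). Two small loose ends in your write-up, both easily repaired: you should note that $\Psi=\lam\Lam+f\in L^2$, which follows from the identity $1-\lam^2(|k|^2+\lam^2)^{-1}=|k|^2(|k|^2+\lam^2)^{-1}$ applied to $\hat f$ with $|k|\hat f\in L^2$; and $\rIm\varkappa(\lam)\ne0$ requires $\varrho\not\equiv 0$ (the case $\varrho\equiv0$ being trivial, ${\bf H}={\bf H}_0$). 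Your closing "softer alternative" (uniform boundedness of the group from conservation of ${\cal H}$, which is equivalent to $\Vert\cdot\Vert_{\cal E}^2$ by $|\langle\Lam,J\varrho\rangle|\le\varkappa_0^{1/2}\Vert\na\Lam\Vert_{L^2}$, then $({\bf H}-\lam)^{-1}=-\int_0^\infty e^{-\lam t}e^{t{\bf H}}dt$) is also valid, granted that Proposition \ref{wp} supplies the generation of the group, and is closest in spirit to how the paper obtains \eqref{PW}.
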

\begin{proof}
First, the energy conservation \eqref{enec} and  non negativity of  ${\cal H}(Z)$ imply that ${\rm Ker}({\bf H}-\lam)=0$.
Indeed, otherwise there is a nonzero solution $Z(t)=e^{\lam t}Z_\lam$
for which the energy cannot be conserved.
Second, ${\bf H}-\lam={\bf H}_0-\lam +T$, where 
${\bf H}_0=\begin{pmatrix}
0 & 1 \\ \De  & 0 \\ \end{pmatrix}$, and $T$ is a finite-dimensional operator. 
 Operator $({\bf H}_0-\lam)^{-1}$ is bounded in $\cE$ for every $\rRe\lam>0$. Hence,  ${\bf H}-\lam$ is invertible by the Fredholm theory.
 \end{proof}
We rewrite \eqref{A-lam} as
\be\la{F0}
\left\{
\ba{rcl}
\ti\Pi(x,\lam)-\lam\ti\Lam(x,\lam)&=&-\Lam_0(x)
\\
\Delta\ti\Lam(x,\lam)-\lam\ti\Psi(x,\lam)&=&-\Pi_0(x)+\ti\nu(\lam) J\varrho(x)
\ea\right|,\qquad \ti\nu(\lam)=\frac 1I\langle\ti\Lam(\lam),J\varrho\rangle.
\ee
Equivalently, 
in the Fourier space, 
\be\la{F1}
\left\{
\ba{rcl}
\hat{\ti\Pi}(k,\lam)-\lam\hat{\ti \Lam}(k,\lam)&=&-\hat\Lam_0(k)
\\
-k^2\hat{\ti \Lam}(k,\lam)-\lam\hat{\ti \Pi}(k,\lam)&=&-\hat\Pi_0(k)+
\ti\nu(\lam) J\hat\varrho(k)
\ea\right|,\quad k\in\R^2\5.
\ee
We have
$$
\left(
\ba{cc}
-\lam & 1 \\
-k^2 & -\lam
\ea
\right)^{-1}=\frac{1}{k^2+\lam^2}\left(
\ba{cc}
-\lam & -1 \\
k^2 & -\lam
\ea
\right).
$$
Hence,
\beqn\la{hat-Lam}
\hat{\ti\Lam}&=&\frac{\lam\hat\Lam_0+\hat\Pi_0-\ti\nu J\hat\varrho}{k^2+\lam^2}=\frac{\hat K_0}{k^2+\lam^2}-\frac{\ti\nu J\hat\varrho}{k^2+\lam^2},
\qquad \hat K_0(\lam):=\lam\hat\Lam_0+\hat\Pi_0.
\eeqn
Now
$$
I\ti{\nu}(\lam)=\langle \hat{\ti\Lam},J\hat\varrho\rangle=\langle \frac{\hat K_0}{k^2+\lam^2},J\hat\varrho\rangle
-\ti{\nu} \langle \frac{J\hat\varrho}{k^2+\lam^2},J\hat\varrho\rangle
=\langle \frac{\hat K_0}{k^2+\lam^2},J\hat\varrho\rangle-\ti{\nu}\varkappa(\lam),
$$
where $\varkappa(\lam)$ is defined in \eqref{kappa}.
Thus,
\be\la{nu}
\ti{\nu}(\lam)=\frac{\langle \frac{\hat K_0}{k^2+\lam^2},J\hat\varrho\rangle}{I+\varkappa(\lam)}.
\ee
%%%%%%%%%%%%%%%%%%%%%%%%%%%%%%%%%%%%%%%%%%%%%%
%%%%%%%%%%%%%%%%%%%%%%%%%%%%%%%%%%%%%%%%%%%%%%
\setcounter{equation}{0}
\section{2D Schr\"odinger resolvent}
%%%%%%%%%%%%%%%%%%%%%%%%%%%%%%%%%%%%%%%%%%%%%%%%%%%%%%%%%%%%%%%%%%%%%%%%%%%
Denote by $R(\zeta)=(-\Delta-\zeta)^{-1}$ the resolvent of 2D Laplacian.
Recall  the properties of  $R(\zeta)$ \cite{A, KK2012}:
\\
i) $R(\zeta)$ is strongly analytic function of $\zeta\in\C\setminus[0,\infty)$ with the values in  $H^{-1}(\R^2)\to H^{1}(\R^2)$.
\\
ii) For $\zeta\in(0,\infty)$, the convergence holds 
\be\la{lap}
R^{(k)}(\zeta\pm i\ve)\to R^{(k)}(\zeta\pm i0):=R_{\pm}^{(k)}(\zeta),\quad  \ve\to +0,\quad k=0,1,2
\ee
in  $H^{-1}_{\si}(\R^2) \to H^{1}_{-\si}(\R^2)$  with $\si>k+1/2$.
\\
iii) For  $s=0,1,2$  and $|l|\le 2$,  the asymptotics hold
\be\la{A}
 \Vert R_{\pm}^{(k)}(\zeta^2)\Vert_{H^s_\si\to H^{s+l}_{-\si}}
 ={\cal O}(|\zeta|^{-(1-l)}),\quad \zeta \to \infty, \quad k=0,1,2,\quad \si>1/2+k.
 \ee
Now we obtain asymptotics of $R_{\pm}(\zeta^2)$ for $\zeta\to 0$.  Denote  by $P_{\pm}(\zeta)$  an integral operator with the kernel 
\be\la{P0}
P_{\pm}(\zeta,z):=-\frac{\log (\zeta|z|)}{2\pi} + h_{\pm},
~~{\rm where}~~ h_{\pm}=\pm\frac{i}4+\frac{\gamma}{2\pi}+\frac{\log 2}{2\pi}.
\ee
Here $\gamma$ is Euler's constant.  The following lemma is   an adapted version of \cite[Lemma 5]{S2005}. 
%%%%%%%%%%%%%%%%%%%%%%%%%%%%%
\begin{lemma}\la{t-ek}
The following asymptotics hold
\be\la{as-S}
\Vert \pa^k_{\zeta}\big(R_{\pm}(\zeta^2)-P_{\pm}(\zeta)\big)\Vert_{L^2_\beta\to L^2_{-\beta}}={\cal O}(\zeta^{\frac 32-k}),\quad \zeta\to +0,\quad\beta>5/2,\quad k=0,1,2.
\ee
\end{lemma}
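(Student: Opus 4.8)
The plan is to reduce everything to the explicit kernel of the free 2D resolvent together with a single scalar envelope estimate. Recall that the outgoing/incoming resolvents have the explicit integral kernels
\[
R_\pm(\zeta^2)(x,y)=\pm\frac{i}{4}\,H_0^{(1,2)}(\zeta|x-y|),
\]
the Hankel functions of order zero. For real $\zeta>0$ one has $H_0^{(2)}=\overline{H_0^{(1)}}$, so $R_-(\zeta^2)$ is the complex conjugate of $R_+(\zeta^2)$, and correspondingly $\overline{h_+}=h_-$; hence it suffices to treat the $+$ case and the $-$ case follows by conjugation. Introducing the scalar function
\[
\Phi(z):=\frac{i}{4}H_0^{(1)}(z)-\Big(-\frac{\log z}{2\pi}+h_+\Big),
\]
the kernel of $R_+(\zeta^2)-P_+(\zeta)$ is exactly $\Phi(\zeta|x-y|)$, because $P_+$ was chosen precisely to absorb the singular logarithm and the additive constant in the small-argument expansion of $\tfrac{i}{4}H_0^{(1)}$.

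The heart of the argument is the \emph{uniform} pointwise bound
\[
|\Phi^{(k)}(z)|\le C\,z^{3/2-k},\qquad z>0,\quad k=0,1,2.
\]
I would establish this by gluing the two classical asymptotic regimes of $H_0^{(1)}$. As $z\to 0$, the series $H_0^{(1)}=J_0+iY_0$ with $Y_0(z)=\tfrac{2}{\pi}\big(\log(z/2)+\gamma\big)J_0(z)+O(z^2)$ shows that the chosen $h_+$ cancels the leading singular and constant terms, leaving $\Phi(z)=O(z^2\log z)$ and hence $\Phi^{(k)}(z)=O(z^{2-k}\log z)$. As $z\to\infty$, the standard asymptotics $H_0^{(1)}(z)=O(z^{-1/2})$ give $\Phi(z)=\tfrac{\log z}{2\pi}+O(z^{-1/2})$, so $\Phi^{(k)}(z)=O(z^{-k}\log z)$ for $k\ge1$ and $O(\log z)$ for $k=0$. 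In each regime the stated power dominates the true behaviour — at the origin because $z^{2-k}\log z=o(z^{3/2-k})$, at infinity because $z^{-k}\log z=o(z^{3/2-k})$ — while on any compact subinterval of $(0,\infty)$ all derivatives are bounded by analyticity of the Hankel function; combining the three zones yields the global envelope.

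With this envelope the operator bound becomes a weighted Hilbert--Schmidt estimate. Differentiating the kernel gives $\partial_\zeta^k\Phi(\zeta|x-y|)=|x-y|^{k}\,\Phi^{(k)}(\zeta|x-y|)$, so
\[
\big|\partial_\zeta^k\Phi(\zeta|x-y|)\big|\le C\,\zeta^{3/2-k}\,|x-y|^{3/2}.
\]
Since the $L^2_\beta\to L^2_{-\beta}$ operator norm equals the $L^2\to L^2$ norm of the kernel conjugated by $\langle x\rangle^{-\beta}$ and $\langle y\rangle^{-\beta}$, and is dominated by the Hilbert--Schmidt norm, whose square is
\[
C\,\zeta^{3-2k}\iint \langle x\rangle^{-2\beta}\,|x-y|^{3}\,\langle y\rangle^{-2\beta}\,dx\,dy,
\]
I estimate $|x-y|^{3}\le C\big(\langle x\rangle^{3}+\langle y\rangle^{3}\big)$, which factors the double integral into products of $\int_{\R^2}\langle x\rangle^{3-2\beta}dx$ and $\int_{\R^2}\langle x\rangle^{-2\beta}dx$. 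Both converge precisely when $\beta>5/2$, so this is exactly where the hypothesis enters; taking square roots gives $\|\partial_\zeta^k(R_\pm(\zeta^2)-P_\pm(\zeta))\|_{L^2_\beta\to L^2_{-\beta}}\le C\,\zeta^{3/2-k}$, as claimed.

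I expect the delicate point to be the uniform envelope $|\Phi^{(k)}(z)|\le C z^{3/2-k}$, namely verifying that the single exponent $3/2$ simultaneously dominates the $O(z^{2}\log z)$ behaviour at the origin and the $O(\log z)$ behaviour at infinity, and that this forces the weight threshold to be exactly $\beta>5/2$. Once that scalar estimate is secured, the conjugation identity and the Hilbert--Schmidt computation are routine.
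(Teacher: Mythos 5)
Your proposal is correct and follows essentially the same route as the paper's Appendix \ref{aA}: your uniform scalar envelope $|\Phi^{(k)}(z)|\le C z^{3/2-k}$, glued from the small- and large-argument Hankel asymptotics, is exactly the paper's splitting of the kernel $R_{\pm}(\zeta^2,z)-P_{\pm}(\zeta,z)$ at $\zeta|z|=1$ into $Q_{\pm}$ and $S_{\pm}$, each bounded by $C\zeta^{3/2-k}|z|^{3/2}$ in \eqref{Q}--\eqref{S}, and your concluding weighted Hilbert--Schmidt computation with $|x-y|^3\le C(\langle x\rangle^3+\langle y\rangle^3)$, which pins down $\beta>5/2$, is precisely the step the paper leaves implicit in the final sentence of its proof. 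One correction to your justification: at infinity $\Phi^{(k)}(z)$ is \emph{not} $O(z^{-k}\log z)$ for $k\ge 1$ --- asymptotic expansions cannot be differentiated term by term, and the derivatives of $H_0^{\pm}$ oscillate with amplitude $z^{-1/2}$ for every $k$ (this is the paper's bound \eqref{g-as-s1}); since $z^{-1/2}\le z^{3/2-k}$ for $z\ge 1$ and $k\le 2$, with equality of exponents at $k=2$ so there is no slack there, your envelope and hence the proof survive unchanged.
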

%%%%%%%%%%%%%%%%%%%%%%%%%%%%
We give  the proof in Appendix \ref{aA} for the sake of completeness.
Denote by  $g(\lam)$ for $\rRe\lam>0$,  the integral operator with the integral kernel 
\be\la{g}
g(\lam,z):=F^{-1}_{k\to z}[\frac{1}{k^2+\lam^2}].
\ee
By \eqref{lap}, there exists
\be\la{g-rep0}
g(i\mu+\ve)\to g(i\mu+0)=R_{\mp}(\mu^2),\quad \ve\to +0, \quad \pm\mu>0
\ee
in  $H^{-1}_{\si}(\R^2) \to H^{1}_{-\si}(\R^2)$  with $\si>1/2$.
The representation  and asymptotics  \eqref{A} imply that 
\be\la{g-as}
 \Vert g^{(k)}(i\mu+0)\Vert_{H^s_\si \to H^{s+l}_{-\si}}
 ={\cal O}(|\mu|^{-1+l}), \quad \mu\to\infty, \quad k=0,1,2,\quad \si>1/2+k
 \ee
for $s=0,1,2$  and $|l|\le 2$.  
Lemma \ref{t-ek} implies the following asymptotics.
 %%%%%%%%%%%%%%%%%%%%%%%%%%%%%%%%%%%%%%%
 \begin{lemma}\la{Dr-as1}
Let $\rho\in{\cal S}(\R^2)$ satisfy  \eqref{rosym}  and \eqref{zero}. Then   
\be\la{g-mu-as}
g(i\mu+0)\varrho=g_0\varrho+ g_R(\mu)\varrho,\quad \pm\mu>0,
\ee
where $g_0: H^{-1}_{\beta}\to H^1_{-\beta}$, $\beta>1$ is an integral operator with the kernel $g_0(x-y)=-\frac{1}{2\pi}\log|x-y|$,
and 
\be\la{g-R-as}
\Vert g_R^{(k)}(\mu)\Vert_{L^2_\beta\to L^2_{-\beta}}={\cal O}(|\mu|^{\frac 32-k}),\qquad \mu\to 0,\quad k=0,1,2,\quad\beta>5/2.
\ee
 \end{lemma}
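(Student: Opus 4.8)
The plan is to combine the boundary-value representation \eqref{g-rep0} of $g(i\mu+0)$ with the threshold expansion of Lemma \ref{t-ek}, exploiting that the logarithmically singular part of the 2D resolvent is annihilated by $\varrho$ thanks to the spherical symmetry \eqref{rosym}.

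First, by \eqref{g-rep0} the operator $g(i\mu+0)$ equals $R_{\mp}(\mu^2)$ for $\pm\mu>0$. Setting $\zeta=|\mu|$ and applying Lemma \ref{t-ek}, I would split
$$
g(i\mu+0)=R_{\mp}(\mu^2)=P_{\mp}(|\mu|)+g_R(\mu),\qquad g_R(\mu):=R_{\mp}(\mu^2)-P_{\mp}(|\mu|).
$$
The estimate \eqref{g-R-as} for $g_R^{(k)}(\mu)$ then follows at once from \eqref{as-S}: since $\zeta=|\mu|$ one has $\pa_\mu=\pm\pa_\zeta$, so the chain rule contributes only factors $(\pm1)^k$ that leave the order ${\cal O}(|\mu|^{\frac 32-k})$ unchanged. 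As $\varrho\in L^2_\beta$ for every $\beta$ (because $\rho\in{\cal S}(\R^2)$), the operator bound transfers to $g_R(\mu)\varrho$.

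The essential step is to evaluate $P_{\mp}(|\mu|)\varrho$. Writing the kernel \eqref{P0} as
$$
P_{\mp}(|\mu|,z)=-\fr{\log|z|}{2\pi}-\fr{\log|\mu|}{2\pi}+h_{\mp},
$$
the last two terms are constant in $z$, so that their convolution with $\varrho$ equals $\big(-\fr{\log|\mu|}{2\pi}+h_{\mp}\big)\int\varrho(y)\,dy$. By the spherical symmetry \eqref{rosym} the field $\varrho=x\rho$ is odd, hence $\int\varrho\,dy=0$ and both contributions—the $\log|\mu|$ singularity and the bounded constant $h_{\mp}$—drop out entirely. What survives is precisely convolution with $-\fr{\log|z|}{2\pi}$, i.e.\ $P_{\mp}(|\mu|)\varrho=g_0\varrho$, which combined with the previous paragraph yields \eqref{g-mu-as}.

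The heart of the matter is exactly this cancellation: the $\log|\mu|$ singularity of the 2D resolvent at the threshold $\lam=0$—whose resonances are the constant functions—is invisible to $\varrho$ by virtue of the vanishing first moment $\int\varrho=0$. The one point needing a little care beyond this is checking that $g_0\varrho\in H^1_{-\beta}$ for $\beta>1$: the vanishing mean again removes the logarithmic growth at infinity of the two-dimensional logarithmic potential $g_0$, so that $g_0\varrho$ and its gradient are genuinely square integrable against the weight $\langle x\rangle^{-\beta}$.
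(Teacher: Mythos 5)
Your proposal is correct and follows essentially the same route as the paper's proof: the boundary representation \eqref{g-rep0} combined with the splitting of Lemma \ref{t-ek}, and the observation that the constant-in-$z$ terms $-\frac{\log|\mu|}{2\pi}+h_{\mp}$ of the kernel \eqref{P0} are killed because $\int\varrho\,dx=0$ (by oddness of $\varrho=x\rho$ under \eqref{rosym}), leaving exactly $g_0\varrho$. Your closing verification that $g_0\varrho\in H^1_{-\beta}$ is a sound extra check that the paper leaves implicit, but it introduces no new idea.
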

%%%%%%%%%%%%%%%%%%%%%%
\begin{proof}
For  $\rho$ satisfying  \eqref{rosym} and \eqref{zero}, the formulas  \eqref{P0} imply that
\beqn\nonumber
[P_{\pm}(\mu)\varrho](y)&=&-\frac{1}{2\pi}\int \log|x-y|\varrho(x)dx-(\frac{\log |\mu|}{2\pi}-h_{\pm})\int \varrho(x)dx\\
\nonumber
&=&-\frac{1}{2\pi}\int \log|x-y|\varrho(x)dx.
\eeqn
Hence, \eqref{g-mu-as}--\eqref{g-R-as} follows by \eqref{as-S} and \eqref{g-rep0}.
\end{proof}
%%%%%%%%%%%%%%%%
The lemma and asymptotics \eqref{as-S} imply
\begin{cor}\la{cor-as}
Let $\rho\in{\cal S}(\R^2)$ satisfy  \eqref{rosym}    and $f\in L^2_{\beta}$ with $\beta>5/2$.
Then
\be\la{int-rho-f-as}
\langle\frac{\hat \varrho}{k^2+(i\mu+0)^2}, \hat f\rangle=\langle\frac{\hat\varrho}{k^2}, \hat f\rangle
+{\cal O}(|\mu|^{3/2}),\qquad \mu\to 0.
\ee
The asymptotics can be differentiated  two times.
\end{cor}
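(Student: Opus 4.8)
The plan is to reduce the scalar Fourier pairing in the statement to the operator-norm estimate \eqref{g-R-as} already furnished by Lemma \ref{Dr-as1}, keeping everything on the Fourier side so that no normalization constant enters the leading term. First I would observe that, by the definition \eqref{g} and the boundary value \eqref{g-rep0}, the multiplier $\frac{1}{k^2+(i\mu+0)^2}$ is the Fourier symbol of the operator $g(i\mu+0)$, so that $\frac{\hat\varrho}{k^2+(i\mu+0)^2}=\widehat{g(i\mu+0)\varrho}$, while $\frac{\hat\varrho}{k^2}=\widehat{g_0\varrho}$ is the Fourier transform of $g_0\varrho$. Hence the difference of the two pairings in the Corollary is exactly the Fourier pairing of $\widehat{g(i\mu+0)\varrho}-\widehat{g_0\varrho}$ with $\hat f$. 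I would then note that the decomposition \eqref{g-mu-as} is available here: $\varrho=x\rho$ is Schwartz, hence $\varrho\in L^2_\beta$ for every $\beta$, and the spherical symmetry \eqref{rosym} forces $\int\varrho(x)\,dx=0$ by oddness of $x\rho(x)$, which is precisely the cancellation used in the proof of Lemma \ref{Dr-as1}. Therefore $\widehat{g(i\mu+0)\varrho}-\widehat{g_0\varrho}=\widehat{g_R(\mu)\varrho}$.

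It then remains to estimate this remainder term. By Plancherel the Fourier pairing equals, up to the normalization constant $c$ of the Fourier transform, the physical-space duality pairing,
\be
\langle\tfrac{\hat\varrho}{k^2+(i\mu+0)^2},\hat f\rangle-\langle\tfrac{\hat\varrho}{k^2},\hat f\rangle=c\,\langle g_R(\mu)\varrho, f\rangle,
\ee
which is well defined as an $L^2_{-\beta}$–$L^2_\beta$ duality because $g_R(\mu)\varrho\in L^2_{-\beta}$ and $f\in L^2_\beta$. I would bound it by Cauchy--Schwarz together with \eqref{g-R-as} at $k=0$,
\be
|\langle g_R(\mu)\varrho, f\rangle|\le\Vert g_R(\mu)\Vert_{L^2_\beta\to L^2_{-\beta}}\,\Vert\varrho\Vert_{L^2_\beta}\,\Vert f\Vert_{L^2_\beta}={\cal O}(|\mu|^{3/2}),\quad\mu\to 0,
\ee
which yields the asserted error. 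For the differentiability claim I would use that $g_0$ is independent of $\mu$, so the $\mu$-derivatives of the pairing fall entirely on the remainder: $\partial_\mu^k\langle g(i\mu+0)\varrho, f\rangle=\langle g_R^{(k)}(\mu)\varrho, f\rangle$ for $k=1,2$, and the same duality bound with \eqref{g-R-as} gives ${\cal O}(|\mu|^{3/2-k})$, i.e. the asymptotics may be differentiated twice.

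The substantive analytic content is already contained in Lemmas \ref{t-ek} and \ref{Dr-as1}, so the only genuinely delicate points are technical bookkeeping. The main one is that, for $\mu\neq0$, the symbol $\frac{1}{k^2+(i\mu+0)^2}$ is singular on the sphere $\{|k|=|\mu|\}$ and is meaningful only as a boundary value; rigorously one would pair with $g(i\mu+\ve)\varrho$, exploit the decomposition for $\ve>0$, and pass to the limit $\ve\to+0$ using the strong convergence \eqref{lap}--\eqref{g-rep0} in the weighted spaces $H^{-1}_\si\to H^1_{-\si}$. The complementary origin singularity at $\mu=0$, where $\hat\varrho/k^2\sim 1/|k|$, is harmless: $\hat\varrho(0)=0$ by spherical symmetry, and $\hat f$ is continuous near $k=0$ since $f\in L^2_\beta$ with $\beta>5/2$. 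I expect no real obstacle beyond this dualization.
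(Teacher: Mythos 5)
Your proposal is correct and takes essentially the same route as the paper, which obtains the corollary directly from Lemma \ref{Dr-as1}: write $g(i\mu+0)\varrho=g_0\varrho+g_R(\mu)\varrho$, dualize against $f\in L^2_{\beta}$, and apply the norm bounds \eqref{g-R-as} for $k=0,1,2$. Your observation that \eqref{zero} is not needed here, because $\varrho=x\rho$ is odd under \eqref{rosym} so that $\int\varrho\,dx=0$ automatically, correctly resolves the apparent mismatch between the corollary's hypotheses and those of Lemma \ref{Dr-as1}.
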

%%%%%%%%%%%%%%%%%%%%%%%%%%%%%%%%%%%%%%%%%%%%%%
\setcounter{equation}{0}
\section{Asymptotics of $\ti\nu$}
%%%%%%%%%%%%%%%%%%%%%%%%%%%%%%%%%%%%%%%%%%%%%%%%%%%%%%%%%%%%%%%%%%%%%%%%%%% 
 First we  derive   asymptotics of  $\nu(i\mu+0)$ as $\mu\to 0$. 
 %%%%%%%%%%%%%%%%
 \begin{lemma}\la{Lem-q-est}
Let conditions \eqref{rosym}--\eqref{zero} hold, and $Z_0=(\Lam_0,\Psi_0)\in{\cal E}_{\beta}^+$  with $\beta>5/2$.
Then
\be\la{nu-as0}
\ti\nu(i\mu+0)=\Big(-\langle \frac{J\hat\varrho}{k^2},\ov{\hat\Psi}_0\rangle
+i\mu \langle \frac{J\hat\varrho}{k^2},\ov{\hat\Lam}_0\rangle\Big)\frac{1}{I+\varkappa(0)}+{\cal O}(|\mu|^{3/2}), \quad \mu\to 0,
\ee
where $\varkappa(0)=\displaystyle\int\frac{|\hat\varrho(k)|^2dk}{k^2}>0$. The asymptotics allows two differentiations.
\end{lemma}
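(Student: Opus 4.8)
The whole statement is read off from the closed expression \eqref{nu} for $\ti\nu(\lam)$ by passing to the boundary value $\lam=i\mu+0$ and feeding the threshold expansion of Corollary \ref{cor-as} into the numerator and the denominator separately. The analytic heart of the matter has already been settled in Lemmas \ref{t-ek} and \ref{Dr-as1} and in Corollary \ref{cor-as}; what is left is to keep track of the orders in $\mu$ and to bring the limiting inner products into the form displayed in \eqref{nu-as0}.

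First I would expand the numerator. Writing $\hat K_0(\lam)=\lam\hat\Lam_0+\hat\Psi_0$ and putting $\lam=i\mu+0$, I split
$$
\langle\tfrac{\hat K_0}{k^2+\lam^2},J\hat\varrho\rangle
=\langle\tfrac{\hat\Psi_0}{k^2+\lam^2},J\hat\varrho\rangle
+i\mu\,\langle\tfrac{\hat\Lam_0}{k^2+\lam^2},J\hat\varrho\rangle .
$$
Here the hypothesis $Z_0\in{\cal E}_\beta^+$ with $\beta>5/2$ is used in its full strength: the ``$+$''-space guarantees $\Lam_0\in L^2_\beta$ as well as $\Psi_0\in L^2_\beta$, which is exactly the regularity required to apply Corollary \ref{cor-as} (after moving the orthogonal factor $J$ and a conjugation onto the charge slot, using $J^{\top}=-J$). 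The Corollary then replaces $\tfrac1{k^2+\lam^2}$ by $\tfrac1{k^2}$ with an ${\cal O}(|\mu|^{3/2})$ error, so the first term equals $\langle\tfrac{\hat\Psi_0}{k^2},J\hat\varrho\rangle+{\cal O}(|\mu|^{3/2})$ and the second equals $i\mu\langle\tfrac{\hat\Lam_0}{k^2},J\hat\varrho\rangle+{\cal O}(|\mu|^{5/2})$, the latter remainder being absorbed into ${\cal O}(|\mu|^{3/2})$.

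For the denominator I would use $\varkappa(\lam)=\langle\tfrac{\hat\varrho}{k^2+\lam^2},\hat\varrho\rangle$ and apply Corollary \ref{cor-as} once more (with $f=\varrho$) to obtain $\varkappa(i\mu+0)=\varkappa(0)+{\cal O}(|\mu|^{3/2})$, where $\varkappa(0)=\int|\hat\varrho|^2/k^2\,dk>0$. Since $I+\varkappa(0)>0$ is bounded away from zero, the expansion $\tfrac1{I+\varkappa(0)+{\cal O}(|\mu|^{3/2})}=\tfrac1{I+\varkappa(0)}\big(1+{\cal O}(|\mu|^{3/2})\big)$ is legitimate. Multiplying the expanded numerator by this factor and retaining the ${\cal O}(1)$ and ${\cal O}(\mu)$ contributions gives \eqref{nu-as0}; the final, purely algebraic, step is to transfer $J$ and the conjugation from the charge onto the fields, which---together with the fact that $\hat\varrho$ is purely imaginary, since $\varrho=x\rho$ is real and odd by \eqref{rosym}---produces the minus sign in front of the $\ov{\hat\Psi}_0$-term and the conjugated fields $\ov{\hat\Psi}_0,\ov{\hat\Lam}_0$.

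The last assertion follows because Corollary \ref{cor-as} allows two differentiations in $\mu$: applying $\pa_\mu^k$, $k=1,2$, to each of the three expansions above turns the remainder ${\cal O}(|\mu|^{3/2})$ into ${\cal O}(|\mu|^{3/2-k})$, while the $\mu$-independent limiting terms drop out. Since $I+\varkappa(i\mu+0)$ is $C^2$ and bounded away from zero near $\mu=0$, the quotient rule then passes differentiability to $\ti\nu(i\mu+0)$ and reproduces the derivatives of the right-hand side of \eqref{nu-as0}. I expect the only point needing genuine care to be exactly this propagation of the fractional remainder through the division and the two differentiations: one must verify that differentiating the ${\cal O}(|\mu|^{3/2})$ denominator error and forming the quotient never generates a term worse than ${\cal O}(|\mu|^{3/2-k})$, which is where the uniform lower bound $I+\varkappa(0)>0$ and the differentiated bounds of Corollary \ref{cor-as} are decisive.
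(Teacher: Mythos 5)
Your proposal is correct and follows essentially the same route as the paper: both expand the numerator $\langle \hat K_0/(k^2+(i\mu+0)^2),J\hat\varrho\rangle$ and the denominator $I+\varkappa(i\mu+0)$ near $\mu=0$ via Corollary \ref{cor-as} and then divide using the representation \eqref{nu}, with the two differentiations inherited from the differentiability statement of that corollary. The only cosmetic difference is that the paper records the intermediate remainders as ${\cal O}(\mu^2\log|\mu|)$ while you keep the ${\cal O}(|\mu|^{3/2})$ that Corollary \ref{cor-as} literally provides (which suffices for \eqref{nu-as0}), and you spell out the sign/conjugation bookkeeping (via $\ov{\hat\varrho}=-\hat\varrho$) that the paper leaves implicit.
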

%%%%%%%%%%%%%%%%%%%%
\begin{proof}
Definition \eqref{kappa} of $\varkappa$ and Corollary \ref{cor-as}  imply that
$$
\varkappa(i\mu+0)=\langle\frac{\hat \varrho}{k^2+(i\mu+0)^2},\hat \varrho\rangle=\varkappa(0)+{\cal O}(\mu^2\log|\mu|), \quad \mu\to 0.
$$
Further,  Corollary \ref{cor-as}  implies that
\beqn\nonumber
\langle \frac{\hat K_0(i\mu+0)}{k^2+(i\mu+0)^2},J\hat\varrho\rangle&=&i\mu\langle\frac{ J\hat\varrho}{k^2+(i\mu+0)^2},\ov{\hat \Lam}_0\rangle
-\langle\frac{J\hat \varrho}{k^2+(i\mu+0)^2},\ov{\hat \Psi}_0\rangle\\
\nonumber
&=&-\langle \frac{J\hat\varrho}{k^2},\ov{\hat\Psi}_0\rangle
+i\mu \langle \frac{J\hat\varrho}{k^2},\ov{\hat\Lam}_0\rangle+{\cal O}(\mu^2\log|\mu|), \quad \mu\to 0.
\eeqn
Hence, \eqref{nu-as0} follows by definition (\ref{nu}) of $\ti\nu$.
\end{proof}
Now we estimate $\ti\nu(i\mu+0)$ for large $\mu\in\R$.
 %%%%%%%%%%%%%%%%%%%%%%%%%%%%%
 \begin{lemma}\la{Lem-q-est2}
Let condition \eqref{rosym} hold, and $Z_0=(\Lam_0,\Psi_0)\in{\cal E}_{\beta}^+$  with $\beta>5/2$. Then for sufficiently large $B>0$, the bound holds
\be\la{nu-as}
|\ti\nu^{(k)}(i\mu+0)|\le C|\mu|^{-2}, \quad \mu\in\R, \quad |\mu| \ge B,\quad k=0,1,2.
\ee
\end{lemma}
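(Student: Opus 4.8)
The plan is to read off $\tilde\nu(i\mu+0)=N(\mu)/D(\mu)$ directly from \eqref{nu}, with numerator $N(\mu)=\langle \frac{\hat K_0(i\mu+0)}{k^2+(i\mu+0)^2},J\hat\varrho\rangle$ and denominator $D(\mu)=I+\varkappa(i\mu+0)$, and to estimate the two factors separately at high energy by means of the resolvent bounds \eqref{g-as}. By Parseval and the limiting absorption principle \eqref{g-rep0}, the numerator is the $L^2$--pairing $N(\mu)=\langle g(i\mu+0)K_0(i\mu+0),J\varrho\rangle$, where $g(i\mu+0)$ is the operator \eqref{g}. Recalling $\hat K_0(\lam)=\lam\hat\Lam_0+\hat\Psi_0$ from \eqref{hat-Lam} (with $\Pi_0=\Psi_0$), so that $K_0(i\mu+0)=i\mu\Lam_0+\Psi_0$, I split
\[
N(\mu)=i\mu\,a(\mu)+b(\mu),\qquad a(\mu):=\langle g(i\mu+0)\Lam_0,J\varrho\rangle,\quad b(\mu):=\langle g(i\mu+0)\Psi_0,J\varrho\rangle.
\]

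First I would dispose of the denominator. Since $\varrho\in{\cal S}(\R^2)$, applying \eqref{g-as} with $s=0$, $l=0$ to $\varkappa^{(k)}(i\mu+0)=\langle g^{(k)}(i\mu+0)\varrho,\varrho\rangle$ shows that $\varkappa(i\mu+0)$ together with its first two $\mu$--derivatives tends to $0$ as $|\mu|\to\infty$. Hence $D(\mu)\to I\ne 0$, so for all sufficiently large $B$ one has $|D(\mu)|\ge I/2$ while $D',D''$ stay bounded on $|\mu|\ge B$; note that \eqref{M-condition} is not needed in this regime, the decay of $\varkappa$ doing the work.

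The substance is the numerator, and the decisive input is the two--derivative gain in \eqref{g-as}, i.e. the case $l=-2$, which has to defeat the prefactor $i\mu$. Because $\Lam_0,\Psi_0\in L^2_\beta$ (this is exactly the extra control in the ${\cal E}_\beta^+$--norm beyond ${\cal E}_\beta$) and $J\varrho$ is Schwartz, I would estimate, for $j=0,1,2$,
\[
|a^{(j)}(\mu)|=|\langle g^{(j)}(i\mu+0)\Lam_0,J\varrho\rangle|\le \Vert g^{(j)}(i\mu+0)\Vert_{L^2_\beta\to H^{-2}_{-\beta}}\,\Vert\Lam_0\Vert_{L^2_\beta}\,\Vert J\varrho\Vert_{H^2_\beta}={\cal O}(|\mu|^{-3}),
\]
and likewise $|b^{(j)}(\mu)|={\cal O}(|\mu|^{-3})$, using \eqref{g-as} with $s=0$, $l=-2$; the requirement $\beta>1/2+j$ is covered by $\beta>5/2$ for $j\le 2$. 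The structural point is that by \eqref{g-as} each $\mu$--derivative of the resolvent carries the \emph{same} power $|\mu|^{-3}$, costing only additional weight and no decay. The Leibniz rule then gives $\frac{d^j}{d\mu^j}[i\mu\,a(\mu)]=i\mu\,a^{(j)}(\mu)+ji\,a^{(j-1)}(\mu)={\cal O}(|\mu|^{-2})$, whence $N^{(j)}(\mu)={\cal O}(|\mu|^{-2})$ for $j=0,1,2$.

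Finally I would assemble $\tilde\nu=N/D$ by the quotient rule. Since $N,N',N''={\cal O}(|\mu|^{-2})$, $D$ is bounded below by $I/2$, and $D',D''$ are bounded (indeed $o(1)$), each of $\tilde\nu,\tilde\nu',\tilde\nu''$ is a finite sum of products of one ${\cal O}(|\mu|^{-2})$ factor with bounded factors, divided by a power of $D$ bounded away from zero, so $\tilde\nu^{(k)}(i\mu+0)={\cal O}(|\mu|^{-2})$ for $k=0,1,2$ and $|\mu|\ge B$. The two signs of $\mu$ are treated identically, the sign only selecting the boundary value $g(i\mu+0)=R_{\mp}(\mu^2)$ to which \eqref{g-as} is applied. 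The main obstacle is precisely the factor $i\mu$ multiplying the $\Lam_0$--term: without the gain $l=-2$ in \eqref{g-as} one would only reach ${\cal O}(|\mu|^{-1})$, so it is essential both that $\Lam_0\in L^2_\beta$ and that $J\varrho$ is Schwartz, the latter supplying the arbitrary smoothness and decay needed to pair against $g^{(j)}(i\mu+0)\Lam_0\in H^{-2}_{-\beta}$.
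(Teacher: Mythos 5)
Your proposal is correct and follows essentially the same route as the paper: estimate the denominator via $\varkappa^{(k)}(i\mu+0)={\cal O}(|\mu|^{-1})$ from \eqref{g-as} with $s=l=0$, and beat the factor $i\mu$ on the $\Lam_0$-term of the numerator by the smoothing gain $l=-2$ (the paper pairs $g^{(k)}\Lam_0\in H^{-2}_{-\beta}$ against $\varrho\in H^2_\beta$ exactly as you do), then assemble by the Leibniz and quotient rules. The only cosmetic difference is that for the $\Psi_0$-term the paper uses $l=-1$ to get ${\cal O}(|\mu|^{-2})$, where you use $l=-2$ to get the (unneeded but equally valid) bound ${\cal O}(|\mu|^{-3})$.
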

 %%%%%%%%%%%%%%%%%%%%%%%%%%%%%
 \begin{proof}
 Applying \eqref{g-as} with $s=l=0$, $k=0,1,2$ and $\si>1/2+k$, we obtain 
 \beqn\nonumber
|\varkappa^{(k)}(i\mu+0)|&=&|\langle g^{(k)}(i\mu+0)\varrho,\varrho\rangle|\le C\Vert g^{(k)}(i\mu+0)\varrho\Vert_{L^{2}_{-\si}}\Vert\varrho\Vert_{L^2_{\si}}\\
\la{q-as}
&\le& C_1|\mu|^{-1}\Vert\varrho\Vert_{L^2_{\si}}^2\le C_2|\mu|^{-1}, \quad |\mu|\ge B.
\eeqn
Further,  for $\mu\ge B$,  \eqref{g-as} implies 
\beqn\nonumber
|\langle g^{(k)}(i\mu+0)\Psi_0,J\varrho\rangle|&\le& C\Vert g^{(k)}(i\mu+0)\Psi_0\Vert_{H^{-1}_{-\beta}}\Vert \varrho\Vert_{H^1_{\beta}}
\le C_1 |\mu|^{-2}\Vert\Psi_0\Vert_{L^{2}_{\beta}},\\ 
\nonumber
|\langle g^{(k)}(i\mu+0)\Lam_0,J\varrho\rangle| &\le& C\Vert g^{(k)}(i\mu+0)\Lam_0\Vert_{H^{-2}_{-\beta}}\Vert \varrho\Vert_{H^2_{\beta}}
\le C_1 |\mu|^{-3}\Vert\Lam_0\Vert_{L^{2}_{\beta}}, \quad k=0,1,2.
\eeqn
Hence,  
\be\la{K0-as}
|\pa^{k}_{\mu}\langle \frac{\hat K_0}{k^2+(i\mu+0)^2},J\hat\varrho\rangle
\le C|\mu|^{-2}\Vert Z_0\Vert_{{\cal E}_\beta},
\quad |\mu|\ge B, \quad k=0,1,2,
\ee
by definition \eqref{hat-Lam} of $\hat K_0$. Finally, \eqref{nu-as} follow 
from 
 \eqref{nu} and (\ref{q-as})--(\ref{K0-as}).
\end{proof}
%%%%%%%%%%%%%%%%%%%%%%%%%%%%%%%%%%%%%%%%%%%%%%
\bc\la{cnu}
Lemmas \ref{Lem-q-est} and \ref{Lem-q-est2} together with condition \eqref{M-condition} imply that $\ti\nu(i\mu+0)\in C^2(\R)$.
\ec
%%%%%%%%%%%%%%%%%%%%%%%%%%%%%%%%%%%%%%%%%%%%%%%%%%%%%%%%%
%%%%%%%%%%%%%%%%%%%%%%%%%%%%%%%%%%%%%%%%%%%%%%%%%%%%%%%%%%
\setcounter{equation}{0}
\section{Attraction to soliton in weighted Sobolev norms}
%%%%%%%%%%%%%%%%%%%%%%%%%%%%%%%%%%%%%%%%%%%%%%%%%%%%%%%%%%%%%%%%%%%%%%%%%%%
\subsection{Time decay of $\nu(t)$}
%%%%%%%%%%%%%%%%%%%%%%%%%%%%%%%%%%%%%%%%%%%%%%%%%%%%%%%%%%%%%%%%%%%%%%%%%%%
Here we prove time decay of $\nu(t)$.
\begin{lemma}\la{vec-decay}
Suppose that the condition \eqref{M-condition} holds, and
$Z_0\in {\cal E}_{\beta}^+$ with $\beta>5/2$. Then $\nu(t)$ is continuous and
\be\la{decQP}
|\nu(t)|\le \frac {C(\rho,\beta)}{(1+t)^2},\qquad t> 0.
\ee
\end{lemma}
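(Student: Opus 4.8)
The plan is to recover $\nu(t)$ from its Fourier--Laplace transform $\ti\nu(\lam)$ and to convert the smoothness and decay of the boundary values $\ti\nu(i\mu+0)$ into time decay. Since $Z(t)$ is bounded in $\cE$ by Proposition \ref{wp} and energy conservation, the scalar $\nu(t)=\frac 1I\langle\Lam(t),J\varrho\rangle$ is bounded and continuous, and $\ti\nu(\lam)$ is its Laplace transform, analytic for $\rRe\lam>0$ with the Paley--Wiener bound inherited from \eqref{PW}. By Corollary \ref{cnu} the boundary values $\ti\nu(i\mu+0)$ exist and lie in $C^2(\R)$, while Lemma \ref{Lem-q-est2} controls their decay at infinity. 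First I would justify shifting the Laplace inversion contour $\{\rRe\lam=\sigma>0\}$ onto the imaginary axis, obtaining
\[
\nu(t)=\frac 1{2\pi}\int_{\R}e^{i\mu t}\,\ti\nu(i\mu+0)\,d\mu,\qquad t>0,
\]
the integral converging absolutely because $\ti\nu(i\mu+0)$ is continuous and is ${\cal O}(|\mu|^{-2})$ at infinity by \eqref{nu-as}.

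Next I would integrate by parts twice in $\mu$. The boundary terms at $\mu=\pm\infty$ vanish since \eqref{nu-as} with $k=0,1$ gives $\ti\nu(i\mu+0)={\cal O}(|\mu|^{-2})$ and $\pa_\mu\ti\nu(i\mu+0)={\cal O}(|\mu|^{-2})$. This extracts the decisive factor $t^{-2}$:
\[
\nu(t)=-\frac 1{2\pi t^2}\int_{\R}e^{i\mu t}\,\pa_\mu^2\ti\nu(i\mu+0)\,d\mu,\qquad t>0.
\]
It then remains to check that $\pa_\mu^2\ti\nu(i\mu+0)\in L^1(\R)$. Near $\mu=0$ the second derivative is continuous, hence bounded, by Corollary \ref{cnu}; this is consistent with Lemma \ref{Lem-q-est}, whose remainder ${\cal O}(|\mu|^{3/2})$ admits two differentiations. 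For $|\mu|\ge B$ it is ${\cal O}(|\mu|^{-2})$ by \eqref{nu-as}. Both regimes are integrable, so $\pa_\mu^2\ti\nu(i\mu+0)\in L^1(\R)$.

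Combining these steps gives $|\nu(t)|\le\frac 1{2\pi t^2}\Vert\pa_\mu^2\ti\nu(i\mu+0)\Vert_{L^1(\R)}\le C(\rho,\beta)\,t^{-2}$ for $t>0$, the constant absorbing the asymptotic constants of Lemmas \ref{Lem-q-est} and \ref{Lem-q-est2}. Continuity of $\nu(t)$ follows from the integral representation by dominated convergence. Since in addition $\nu$ is bounded on $[0,1]$, the two estimates combine into the uniform bound \eqref{decQP} with weight $(1+t)^{-2}$.

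The step I expect to be the main obstacle is the first one: rigorously passing from Laplace inversion on $\{\rRe\lam=\sigma\}$ to the boundary-value integral over the imaginary axis. This requires uniform control of $\ti\nu(\lam)$ as $\rRe\lam\to+0$ together with its decay for large $|\rIm\lam|$, so as to combine the limiting-absorption convergence $\ti\nu(i\mu+\ve)\to\ti\nu(i\mu+0)$ with the large-$\mu$ bound \eqref{nu-as} and discard the contributions of the arcs at infinity; this is the standard but delicate ingredient of the Agmon--Jensen--Kato scheme.
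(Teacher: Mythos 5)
Your proposal is correct and follows essentially the same route as the paper: the paper's proof writes $\nu(t)$ via the Fourier inversion of the boundary values $\ti\nu(i\mu+0)$ and obtains \eqref{decQP} by double partial integration, using Corollary \ref{cnu}, the asymptotics \eqref{nu-as0} near $\mu=0$, and the bound \eqref{nu-as} at infinity — exactly the ingredients you invoke. Your version merely spells out the contour-shifting and boundary-term details that the paper leaves implicit.
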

\begin{proof}
The component $\nu(t)$ is given by 
$$
\nu(t)=\frac{1}{2\pi}\int e^{i\mu t} \ti\nu(\mu+i0) d\mu=\frac{1}{2\pi}\int_{-\infty}^0 e^{i\mu t} \ti\nu(\mu+i0) d\mu
+\frac{1}{2\pi}\int_0^\infty e^{i\mu t} \ti\nu(\mu+i0) d\mu, \quad t> 0.
$$
Now Corollary \ref{cnu}, asymptotics \eqref{nu-as0}  and bound \eqref{nu-as} 
 imply  \eqref{decQP}  by the double partial integration.
\end{proof}
%%%%%%%%%%%%%%%%%%%%%%%%%%%%%%%%%%
%%%%%%%%%%%%%%%%%%%%%%%%%%%%%%%%%%
\subsection{Time decay of $Z(t)$}
%%%%%%%%%%%%%%%%%%%%%%%%%%%%%%%%%%%%
\begin{lemma}\la{field-decay}
Suppose that the condition \eqref{M-condition} holds, and
$Z_0\in {\cal E}_{\beta}^+$ with $\beta>5/2$. Then
\be\la{Zdec}
\Vert Z(t)\Vert_{-\beta}\le \frac {C(\rho,\beta)}{(1+t)^2},\qquad t> 0.
\ee
\end{lemma}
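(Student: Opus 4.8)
The plan is to recover $Z(t)$ from its Fourier--Laplace transform by the contour representation $Z(t)=\frac{1}{2\pi}\int_{\R}e^{i\mu t}\ti Z(i\mu+0)\,d\mu$, $t>0$, justified exactly as in Lemma~\ref{vec-decay}: the bound \eqref{PW} and the existence of the boundary values \eqref{g-rep0} allow one to push the contour $\rRe\lam=\ve$ down to the imaginary axis. Splitting \eqref{hat-Lam} into the term built from $\hat K_0=i\mu\hat\Lam_0+\hat\Pi_0$ and the term $-\ti\nu(i\mu+0)J\hat\varrho$ is the same as writing the Duhamel formula for $\dot Z={\bf H}_0 Z+(0,-\nu(t)J\varrho)$,
\[
Z(t)=W(t)Z_0-\int_0^t\nu(s)\,W(t-s)\,(0,J\varrho)\,ds ,
\]
where $W(t)$ is the free wave group generated by ${\bf H}_0$. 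Thus everything reduces to a single weighted decay estimate for the free group, $\Vert W(\tau)Y\Vert_{-\beta}\le C(1+\tau)^{-2}\Vert Y\Vert_{\cE_\beta^+}$ with $\beta>5/2$, applied to $Y=Z_0$ and to $Y=(0,J\varrho)\in\cE_\beta^+$: the first term then decays like $(1+t)^{-2}$, and the integral term is bounded by $\int_0^t|\nu(s)|(1+t-s)^{-2}\,ds$, which is $\le C(1+t)^{-2}$ by $|\nu(s)|\le C(1+s)^{-2}$ from Lemma~\ref{vec-decay} together with the convolution inequality $\int_0^t(1+s)^{-2}(1+t-s)^{-2}\,ds\le C(1+t)^{-2}$.

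For the free estimate I would use $g(i\mu+0)=R_\mp(\mu^2)$ and split the frequency with a smooth cutoff at $|\mu|=1$. The low-frequency regime is the crux, and it is where the threshold analysis of Section~6 enters. Writing $R_\mp(\mu^2)=P_\mp(\mu)+\big(R_\mp(\mu^2)-P_\mp(\mu)\big)$, the decisive structural fact is that the gradient of the kernel \eqref{P0} is independent of $\mu$, $\na_x P_\mp(\mu,x-y)=-\frac{1}{2\pi}\frac{x-y}{|x-y|^2}$. Hence the logarithmic threshold resonance carried by the constant coefficient $-\frac{\log\mu}{2\pi}$ is annihilated in $\na\Lam$, and what remains of the $P_\mp$-term there is affine in $\mu$ against the cutoff, contributing a rapidly decaying tail. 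For the momentum $\Pi=\dot\Lam$ the extra factor $i\mu$ turns the surviving resonance (coefficient $i\mu\int\Lam_0+\int\Pi_0$) into the mild singularity $\mu\log\mu$, whose Fourier transform is exactly ${\cal O}(t^{-2})$; here $\beta>1$ is used so that the resonant constant function lies in $L^2_{-\beta}$. The nonresonant remainder obeys $\Vert\pa^k_\mu\big(R_\mp(\mu^2)-P_\mp(\mu)\big)\Vert={\cal O}(|\mu|^{3/2-k})$ of Lemma~\ref{t-ek}, taken in the $H^1_{-\beta}$-valued form permitted by properties (i)--(ii) of $R(\zeta)$ so that $\na\Lam$ is controlled; since $\beta>5/2$ this holds up to $k=2$ with an integrable second derivative, so two integrations by parts in $\mu$ produce the factor $t^{-2}$.

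The high-frequency regime $|\mu|\ge1$ is routine: using \eqref{A} and \eqref{g-as} together with the regularity $\Lam_0\in H^1_\beta$, $\Pi_0\in L^2_\beta$ encoded in $Z_0\in\cE_\beta^+$, the integrand and its $\mu$-derivatives are controlled, and the oscillation $e^{i\mu\tau}$ against the nonstationary phase of the free wave kernel (finite speed of propagation) yields decay faster than $t^{-2}$. Combining the two regimes bounds $\Vert\na\Lam(\tau)\Vert_{L^2_{-\beta}}$ and $\Vert\Pi(\tau)\Vert_{L^2_{-\beta}}$, hence $\Vert W(\tau)Y\Vert_{-\beta}$ by the definition \eqref{Ebet}, by $C(1+\tau)^{-2}$. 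Feeding this back into the Duhamel bound of the first paragraph yields \eqref{Zdec}.

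The step I expect to be the main obstacle is precisely the low-frequency threshold analysis: one must isolate the rank-one resonant piece of $R_\mp(\mu^2)$, verify that $\na\Lam$ loses it entirely while $\Pi$ retains only the sharp $\mu\log\mu$ term, and check that the \emph{gradient} of the nonresonant remainder still satisfies the ${\cal O}(|\mu|^{3/2-k})$ bounds up to $k=2$ --- this is exactly the point where the hypothesis $\beta>5/2$ is used. A secondary technical nuisance is matching the low- and high-frequency pieces through the smooth cutoff without creating spurious boundary terms in the integrations by parts. The case $t<0$ follows by the symmetry $t\mapsto-t$, $R_\pm\mapsto R_\mp$.
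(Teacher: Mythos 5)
Your outer argument coincides with the paper's proof of Lemma \ref{field-decay}: rewrite \eqref{lin} as $\dot Z(t)={\bf H}_0Z(t)-(0,\nu(t)J\varrho)$, apply the Duhamel formula \eqref{Duhamel}, use the decay \eqref{decQP} from Lemma \ref{vec-decay} together with the convolution inequality, and reduce everything to the weighted dispersive estimate for the free group — this is exactly \eqref{dede}. The divergence, and the genuine gap, lies in how you propose to prove that free estimate. The paper proves \eqref{dede} in Appendix \ref{aB} entirely in physical space: the explicit kernel $G(z,t)=\frac{1}{2\pi}\theta(t-|z|)/\sqrt{t^2-|z|^2}$ satisfies $|\pa_z^\al\pa_t^jG|\le Ct^{-2}$ for $1\le|\al|+j\le 2$ inside the cone $|z|\le\ve t$, combined with $W(z,t)=0$ for $|z|>t$, energy conservation, a near/far spatial splitting of the data, and Hilbert--Schmidt bounds with the weights. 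No smoothness of the data beyond the weighted energy class is used, and no frequency analysis appears.

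Your spectral route fails at high frequencies. With $Z_0\in\cE^+_\beta$ you have only $\Lam_0\in H^1_\beta$, $\Pi_0\in L^2_\beta$, and then \eqref{g-as} gives, e.g., $\Vert\pa_\mu^2[\na g(i\mu+0)\Pi_0]\Vert_{L^2_{-\si}}={\cal O}(1)$ (take $s=0$, $l=1$, $k=2$) and $\Vert\mu\, g''(i\mu+0)\Lam_0\Vert_{H^1_{-\si}}={\cal O}(1)$ (take $s=1$, $l=0$), neither integrable as $\mu\to\infty$; so two integrations by parts do not produce $t^{-2}$. The trade-off in \eqref{g-as} is that $\mu$-decay is bought only by lowering the output Sobolev index: choosing $l=-1$ gives the integrable ${\cal O}(\mu^{-2})$, but then you control $\ti\Pi$ only in $H^{-1}_{-\beta}$, not in the $L^2_{-\beta}$ required by the norm \eqref{Ebet}. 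This is visible already in the paper's treatment of the scalar $\nu(t)$: the high-frequency bound \eqref{nu-as} in Lemma \ref{Lem-q-est2} is paid for by $\varrho\in H^2_\beta$, i.e.\ two more derivatives than $\cE^+_\beta$ supplies for $Z_0$; the mechanism does not transfer to general data. Your appeal to ``nonstationary phase of the free wave kernel (finite speed of propagation)'' is precisely the physical-space mechanism of Appendix \ref{aB}, but it does not operate inside the frequency integral as you set it up, since the operator bounds \eqref{g-as} have already discarded the oscillation $e^{\mp i\mu|z|}$ of the kernel. A secondary gap, which you partly flag yourself: Lemma \ref{t-ek} bounds $R_\pm(\zeta^2)-P_\pm(\zeta)$ only as $L^2_\beta\to L^2_{-\beta}$; to control $\na\Lam$ at low frequency you need the same ${\cal O}(\mu^{3/2-k})$ rates for the \emph{gradient} of the remainder kernel, and this does not follow from the limiting-absorption properties (i)--(ii) — it would require redoing the Hankel-function expansions of Appendix \ref{aA} for $\na_zH_0^\pm(\zeta|z|)$. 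Your low-frequency resonance analysis ($\mu$-independence of $\na_xP_\mp$, the $\mu\log\mu$ singularity in $\Pi$) is sound in spirit, but as a whole the proposed proof of \eqref{dede} does not close, whereas the paper's kernel argument does so with no smoothness hypotheses at all.
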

\bpr
%%%%%%%%
We rewrite  \eqref{lin} as
\be\la{F-eq}
\dot Z(t)= {\bf H}_0 Z(t)-\begin{bmatrix} 0\\ \nu(t) J\varrho\end{bmatrix},\qquad {\bf H}_0 =\begin{pmatrix}
0 & 1 \\
\De & 0 
\end{pmatrix}.
\ee
To prove time decay of $Z(t)$, we apply the Duhamel representation
\be\la{Duhamel}
Z(t)=W(t)Z_0-\int_0^t W(t-s)\begin{bmatrix} 0\\ \nu(s) J\varrho\end{bmatrix} ds,\quad t\ge 0.
\ee
For the group  $W(t)$ the dispersion decay holds (see Appendix \ref{aB})
\be\la{dede}
\Vert W(t)X\Vert_{-\beta}\le C(1+t)^{-2}\Vert X\Vert_{\beta},\quad \beta> 2,\quad t\ge 0.
\ee
 Applying  \eqref{dede} to \eqref{Duhamel}, and  using  \eqref {decQP}, we obtain \eqref{Zdec}.
 \epr
%%%%%%%%%%%%%%%%%%%%%%%%%%%%%%%%%%%%%%%%%%%%%%%%%%%%%%%%%%%%%%
 \bc
 The  splitting (\ref{YZ}) and the decay (\ref{Zdec}) imply the convergence
 \be\la{Ycon}
 Y(t)\stackrel{\cE_{-\beta}}{-\!\!\!-\!\!\!-\!\!\!\to} Y_{\om_*},\qquad t\to\infty.
 \ee
 \ec
 %%%%%%%%%%%%%%%%%%%%%%%%%%%%%%%%%%%%%%%%%%%%%%%%%%%%%%%%%%%%%
%%%%%%%%%%%%%%%%%%%%%%%%%%%%%%%%%%%%%%%%%%%%%%%%%%%%%%%%%%%%%% 
 \section{Asymptotics with dispersive wave in global energy norm}\la{sol-as}
\setcounter{equation}{0}
%%%%%%%%%%%%%%%%%%%%%%%%%%%%%%%%%%%%%%%%%%%%%%%%%%%%%%%%%%%%
Here we prove our main Theorem \ref{main}. 
For the considered 
solution $Y(t)$ and for the soliton (\ref{solvom}) with angular velocity $\om_*$, 
\eqref{pM} together with
\eqref{Zdec} imply that
\beqn\nonumber
\om(t)&=&\frac{1}{I}\big(M+\langle A(t),J\varrho\rangle\big)=
\om_*+\frac{1}{I}\big(\langle A(t),J\varrho\rangle-\langle A_{\om_*},J\varrho\rangle\big)\\
\la{om-as}
&=&\om_*+\frac{1}{I}\langle \Lam(t),J\varrho\rangle=\om_*+ {\cal O}(t^{-2}),\quad t\to\infty.
\eeqn
Further, for the difference 
$Z(t) =Y(t) -Y_{\om_*}(t)$,   equation  \eqref{mls3} implies 
$$
\dot Z(t)={\bf A}_0Z(x.t)-R(t),\qquad R(t):=\begin{pmatrix}
0\\\ (\om(t)-\om_*) J\varrho(x)
\end{pmatrix}.
$$
Then similarly to (\ref{Duhamel}),
\beqn\nonumber
Z(t)&=&W(t)Z_0-\int_0^{t}W(t-s)R(s)ds\\
\nonumber
&=&W(t)\Big(Z_0-\int_0^{\infty}W(-s)R(s)ds\Big)+\int_t^{\infty}W(t-s)R(s)ds=W(t)\Psi_++r_+(t).
\eeqn
Here $\Psi_{+} \in{\cal E}$ and  $\Vert r_{+}(t)\Vert_{\cal E}= {\cal O}(t^{-1})$
by (\ref{om-as})
since $W(t)$ is the unitary group  
  in the space $\cE$. 
%%%%%%%%%%%%%%%%%%%%%%%%%%%%%%%%%%%%%%%%%%%%%%%%%%%%%%%%%%%
\appendix
\setcounter{equation}{0}
%\section{Appendix}
%%%%%%%%%%%%%%%%%%%%%%%%%%%%%%%%%%%%%%%%%%%%%%%%%%%%%%%%%%%
%%%%%%%%%%%%%%%%%%%%%%%%%%%%%%%%%%%%%%%%%%%%%%%%%%%%%%%%%%%
\section{Proof of Lemma \ref{t-ek}}\la{aA}
%%%%%%%%%%%%%%%%%%%%%%%%%%%%%%%%%%%%%%%%%%%%%%%%%%%%%%%%%%%%
Recall that  (cf. \cite[Section 2]{S2005})
\be\la{HJY}
R_{\pm}(\zeta^2,z)=\pm\frac{i}{4} H_0^{\pm}(\zeta|z|),
\ee
where  for $H_0(s)$ the asymptotics hold (see \cite{O})
$$
H_0^{\pm}(s)=1\pm \frac{2i}{\pi}\big(\log \frac{s}2+\gamma\big)+{\cal O}(s^{2}\log s),\quad s\to 0.
$$
The asymptotics imply that
\be\la{g-as-s}
\pa_{\zeta}^kR_{\pm}(\zeta^2,z)=\pa_{\zeta}^k P_{\pm}(\zeta,z)+{\cal O}\big((\zeta|z|)^{2-k}\log (\zeta |z|)\big),\quad \zeta |z|\to 0,\quad k=0,1,2,
\ee
where $P_{\pm}(\zeta,z)=-\frac{\log (\zeta|z|)}{2\pi} + h_{\pm}$ as in \eqref{P0}. Moreover, 
\be\la{g-as-s1}
|\pa_{\zeta}^k R_{\pm}(\zeta^2,z)|=\frac 14|\pa_{\zeta}^k H_0^{\pm}(\zeta|z|)|\le \frac{C|z|^k}{\sqrt{\zeta |z|}}, \quad \zeta |z|\ge 1,\quad k=0,1,2.
\ee
Let  $\chi_\zeta(z)$ be characteristic function of the ball $ |z|\le \frac 1{\zeta}$. We have
\beqn\nonumber
R_{\pm}(\zeta^2,z)-P_{\pm}(\zeta,z)&=&\chi_\zeta(z)\big(R_{\pm}(\zeta^2,z)-P_{\pm}(\zeta,z)\big)+
(1-\chi_\zeta(z))\big(R_{\pm}(\zeta^2,z)-P_{\pm}(\zeta,z)\big)\\
\la{sp}
&=&Q_{\pm}(\zeta,z)+S_{\pm}(\zeta,z).
\eeqn
For $\zeta|z|\le 1$, \eqref{g-as-s} implies
\beqn\nonumber
|Q_{\pm}(\zeta,z)|&\le& C(\zeta|z|)^{2}\log (\zeta |z|)\le C(\zeta|z|)^{3/2},\\
\nonumber
 |\pa_{\zeta} Q_{\pm}(\zeta,z)|&\le& C|z|\zeta|z|\log (\zeta |z|)\le C|z|(\zeta|z|)^{1/2}= C\zeta^{1/2}|z|^{3/2}, \\
\la{Q}
|\pa^2_{\zeta} Q_{\pm}(\zeta,z)|&\le& C|z|^2\log (\zeta |z|)\le C\frac{|z|^2}{\sqrt{\zeta|z|}}= C\zeta^{-1/2}|z|^{3/2}.
\eeqn
%%%%%%%%
 Further, for $\zeta|z|\ge 1$,  \eqref{P0}, \eqref{HJY} and \eqref{g-as-s1}  imply 
\beqn\nonumber
|S_{\pm}(\zeta,z)|&\le & C\Big( 1+\log (\zeta|z|)+\frac{1}{\sqrt{\zeta |z|}}\Big)\le C_1(\zeta |z|)^{3/2},\\
\nonumber
|\pa_{\zeta} S_{\pm}(\zeta,z)|&\le& C\Big(\frac{|z|}{\zeta|z|}+\frac{C|z|}{\sqrt{\zeta |z|}}\Big) \le C |z|(\zeta|z|)^{1/2}=C\zeta^{1/2} |z|^{3/2},\\
\la{S}
|\pa^2_{\zeta} S_{\pm}(\zeta,z)|&\le& C\Big(\frac{|z|^2}{(\zeta|z|)^2}+\frac{C|z|^2}{\sqrt{\zeta |z|}}\Big) \le C\frac{|z|^2}{\sqrt{\zeta|z|}}= C\zeta^{-1/2}|z|^{3/2} .
\eeqn
Finally, \eqref{Q} and \eqref{S}  imply \eqref{as-S} by \eqref{sp}.
%%%%%%%%%%%%%%%%%%%%%%%%%%%%%%%%%%%%%%%%%%%%%%%%%%%%%%%%%%%
\section{Dispersive decay}\la{aB}
\setcounter{equation}{0}
%%%%%%%%%%%%%%%%%%%%%%%%%%%%%%%%%%%%%%%%%%%%%%%%%%%%%%%%%%%%
Here we prove the decay \eqref{dede}. Note that 
\be\la{Wv}
 W(z,t)=\left( \ba{ll}
  \dot G(z,t)    &              G(z,t)\\
  \ddot G(z,t)   &        \dot G(z,t)
  \ea \right),\quad G(z,t)=\frac 1{2\pi}\frac{\theta(t-|z|)}{\sqrt{t^2-|z|^2}}, \quad z\in\R^2,\quad t>0.
 \ee 
 {\it Step i)}
  For any $\ve\in (0,1)$, the bounds hold
\beqn\la{G2R}
|\partial^{\alpha}_z\partial^{j}_tG(z,t)|&\le& C(\ve )t^{-2},
\quad1\le|\alpha|+j\le 2, \quad|z|\le\ve t,\quad t\ge 1.
\eeqn
Formula \eqref{Wv} imply  the Huygen's principle  for the group $W(t)$: $W(z, t) = 0$ for  $|z| > t$.
 The energy conservation gives that
\be\la{en-con}
\Vert W(t)Z\Vert_\cE =\Vert Z\Vert_\cE, \quad t\ge 0,\qquad Z\in\cE. 
\ee 
 %%%%%%%
 {\it Step ii)}
We fix an arbitrary $\ve\in (0,1)$.  For any  $t\ge 1$ we
split the  function $Z(x)$ in two terms, $Z=Z_{1,t}+Z_{2,t}$ such that
\be\label{F}
  Z_{1,t}(x)=0~~\mbox{for}~|x|>\ve t/2,
  ~~~~~~~\mbox{and}~~~~~~~~~~
  Z_{2,t}(x)=0~~\mbox{for}~|x|<\ve t/4,
  \ee
with
\begin{equation} \label{FFn}
  \Vert Z_{1,t}\Vert _\cE+\Vert Z_{2,t}\Vert _\cE\le C\Vert Z\Vert _\cE,\quad t\ge 1.
\end{equation}
\textit{Step iii)}
Using \eqref{en-con}--(\ref{FFn}), we obtain for any $\si\ge 0$
\be\label{WF2} 
\Vert W(t)Z_{2,t}\Vert _{\cF _{-\si}}\le \Vert W(t) Z_{2,t}\Vert _{\cF_0}=\Vert Z_{2,t}\Vert _{\cF_0}
 \le  C(\ve) t^{-\si}\Vert Z_{2,t}\Vert _{\cF _\si} \le C(\ve) t^{-\si}\Vert Z\Vert _{\cF _\si},\,\,\, t\ge 1.
\ee
%%%%%%%%%%%%%%%%%%%%%%%
\textit{Step iv)}
Denote by  $\zeta $  the operator of multiplication by the function $\zeta ({|x|}/{t})$ such that $\zeta =\zeta (s)\in C_{0}^{\infty}(\R)$,
$\zeta (s)=1$ for $|s|<\ve/4$, $\zeta (s)=0$ for $ |s|>\ve/2$. Obviously, 
$|\partial _{x}^{\alpha }\zeta ({|x|}/{t})|\leq C(\ve)<\infty$ for $|\alpha|\le 1$ and $t\ge 1.$
Furthermore, $1-\zeta ({|x|}/{t})=0$ for $|x|<\ve t/4$. Hence,  using \eqref{en-con}, we get for $\si\geq 0$
\be \la{f0}
 ||(1-\zeta )W(t) Z_{1,t}||_{\cF _{-\si}}\le C(\ve)t^{-\si}\Vert W(t) Z_{1,t}\Vert_{\cF_0}
\le  C_1(\ve)t^{-\si}\Vert Z\Vert _{\cF _\si},\quad t\ge 1.
 \ee
%%%%%%%%%%%%%%%%%%%%%
\textit{Step v)}
It remains to estimate $\zeta W(t) Z_{1,t}$.
Let $\chi _{\ve t/2}$ 
be the characteristic function of the ball $ |x|\le \ve t/2$.
We will use the same notation for the operator of multiplication by this characteristic function. 
By (\ref{F}), 
\be\la{xx}
  \zeta W(t) Z_{1,t}=\zeta W(t)\chi_{\ve t/2}Z_{1,t}.
\ee
%%%%%%%%%%%%%%%%%%%%%
The norm of the operator
$\zeta W(t)\chi_{\ve t/2}: \cF _{\si}\rightarrow \cF _{-\si}$ is equivalent to the norm of the operator
$\langle x\rangle^{-\si}\zeta W(t)\chi _{\ve t/2}(y)\langle y\rangle^{-\si}:\cF_0 \rightarrow \cF_0$.
The norm of the later operator does not exceed the sum in $\al$ and $j$, $1\le|\al|+j\le 2$
of the norms of integral operators
\begin{equation}\label{1234}
 \langle x\rangle^{-\si}\pa_t^j \pa_{x}^{\al}[\zeta W(t)\chi_{\ve t/2}(y)\langle y\rangle^{-\si}]: L^2(\R^2)\oplus L^2(\R^2)
\to L^2(\R^2)\oplus L^{2}(\R^2).
\end{equation}
The estimates (\ref{G2R}) imply that these operators  are of the 
Hilbert-Schmidt type
for $\si>2$, and their Hilbert-Schmidt norms do not exceed
$C t^{-2}$. Hence, (\ref{FFn}) and (\ref{xx}) imply that for $\si> 2$
\begin{equation} \label{HS}
  \Vert \zeta W (t)F_{1,t}\Vert_{\cF _{-\si}}\le C(\ve) t^{-2}\Vert F_{1,t}\Vert_{\cF_\si}\le C(\ve) t^{-2}\Vert F_{0}\Vert_{\cF _\si},\quad t\ge 1.
\end{equation}
Finally, the estimates (\ref{WF2}),  (\ref{f0}) and (\ref {HS})  with $\si=\beta$ imply \eqref{dede}.
%%%%%%%%%%%%%%%%%%%%%%%%%%%%%%%%%%%%%%%%%%%%%%%%%%%%%%%%%%%%%%%%%%
%%%%%%%%%%%%%%%%%%%%%%%%%%%%%%%%%%%%%%%%%%%%%%%%%%%%%%%%%%%%%%%%%%

\end{document}